\documentclass[12pt,a4paper]{article}

\title{On the Design of Paired Comparison Experiments with Application}
\author{Eric Nyarko\footnote{corresponding author.~E-mail:~\texttt{ericnyarko@ug.edu.gh, nyarkoeric5@gmail.com}}\\
University of Ghana, \\ Department of Statistics and Actuarial Science, \\ Box LG 115, Legon-Accra}
\usepackage{booktabs}
\usepackage{dirtytalk}
\usepackage{float,rotating}
\usepackage{diagbox}
\usepackage{slashbox}
\usepackage{appendix}
\usepackage{tikz}
\usepackage{xcolor}
\usepackage{url}
\usepackage[square,numbers]{natbib}
\setcitestyle{authoryear,open={(},close={)}}
\bibliographystyle{abbrvnat}

\usepackage{array}
\usepackage{longtable,pdflscape,caption}
\usepackage{amsmath,mathtools}
\usepackage{amsthm}
\usepackage{siunitx} 
\usepackage{booktabs}       
\usepackage{dcolumn}        
\usepackage[flushleft]{threeparttable}

\captionsetup{skip=0ex}
\usepackage{dcolumn}
\usepackage{lipsum} 

\date{}
\begin{document}
\maketitle
\begin{abstract}\noindent
In practice, paired comparison experiments involving pairs of either full or partial profiles are frequently used. When all attributes have a general common number of levels, the problem of finding optimal designs is considered in the presence of a second-order interactions model. The $D $-optimal designs for the second-order interactions model in this setting have both types of pairs in which either all attributes have different levels or approximately half of the attributes are different. The proposed optimal designs can be used as a benchmark to compare any design for estimating main effects and two and three attribute interactions. A practical situation that incorporates the corresponding second-order interactions is covered.
\end{abstract}
{\bf Keywords:}~Full profile; Interactions; Optimal design; Paired comparisons; Partial profile; Profile strength\\\vspace{0.5mm}

\noindent
{\it AMS 2000 Subject Classifications}:~Primary:~62K05;~Secondary:~62J15,~62K15
\section{Introduction}
Paired comparisons are similar to experiments with two options (alternatives), which are widely used in many fields of application such as health economics, transportation economics, education and marketing to study people's or consumers' preferences toward new products or services where behaviors of interest typically involve quantitative or qualitative responses \citep[see e.g.,] [] {scheffe1952analysis, green1990conjoint, grasshoff2003optimal}. The present paper draws on this situation of paired comparisons (so called conjoint analysis where responses are usually assessed on a rating scale or discrete choice experiment) as frequently encountered in practice. \par

Typically, in paired comparison experiments, respondents trade-off one alternative against the other (in a hypothetical or occasional real setting) which is generated by an experimental design, and are described by a number of attributes. Usually, in applications one may be interested in main effects as well as interactions between the attributes. For example, \citet{bradley1976treatment} considered an analysis of a factorial experiment on coffee preferences, where main effects plus two-attribute interactions plus three-attribute interactions were of special interest \citep[see also][]{el1978treatment, louviere2000stated}. An experiment in a food manufacturing research where interactions between three ingredients: a milk product, a starch product and an emulsifier were of interest can be found \citep{lewis1985paired}. A study that investigates the extent to which the policy of prioritizing life-extending end-of-life treatments over other types of treatment is consistent with the stated preferences of members of the general public, and which incorporates the corresponding situation of second-order (or three-attribute) interactions can be found \citep{shah2015valuing}. Moreover, \citet{elrod1992empirical} considered main effects and up to four-attribute interactions for a study on student's preferences for rental apartments. The present paper where any of three of the attributes interact is motivated by the aforementioned works. \par

Due to the limited cognitive ability to process information in applications, a paired comparison task including many attributes may result in respondent decisions that do not reflect their actual preferences. A way to overcome these behaviors is to simplify the paired comparison task by holding the levels of some of the attributes constant in every pair.  The profiles in such  pair are called partial profiles, and the number of attributes that are allowed with potentially different levels in the partial profiles is called the profile strength
\citep[e.g., see][]{green1974design,grasshoff2003optimal,chrzan2010using,kessels2011bayesian}.  \par

In this paper we introduce an appropriate model for the situation of full and partial profiles and derive optimal designs in the presence of second-order (or three-attribute) interactions. We consider the case when the alternatives are specified by general common number of level-attributes. A practical situation of interest that incorporates the corresponding second-order interactions is also considered. In the statistical literature, work on determining the structure of the optimal designs by the two-level situation when the first- and last-levels of the attributes were effects-coded as $1$ and $-1$, respectively has been investigated by \citet{van1987optimal} in the case of full profiles and by \citet{schwabe2003optimal} in the case of partial profiles in a main-effects and two-attribute interactions setup. Corresponding results when the common number of the attribute levels is larger than two have been obtained \citep[see][]{grasshoff2003optimal} in a first-order interactions setup for both full and partial profiles. Here, we treat the case of three-attribute interactions when the common number of the attribute levels is larger than two and provide a detailed proof \citep[see][arXiv]{nyarko2019optimalvlevels}. The two-level situation has been investigated by \citet{nyarko2019optimal} in the case of both full and partial profiles.\par
 It is worthwhile mentioning that the invariant designs considered in this paper possess large number of comparisons. However, these designs can serve as a benchmark to judge the efficiency of competing designs as well as a starting point to construct exact designs or fractions which share the property of optimality and can be realized with a reasonable number of comparisons.\par

The remainder of the paper is organized as follows. A general model is introduced in Section 2 for linear paired comparisons which is related to experiments with two options. This is followed by a three-attribute interactions model for full and partial profiles in Section 3 and optimal designs are characterized in Section 4. A practical situation of importance is considered in Section 5 and the final Section 6 offers some conclusions. 
\section{General setting} 
In any experimental situation the outcome of the experiment depends on some factors (attributes), say, $K$ of influence. In this setting the dependence can be described by a vector of regression functions $\textbf{f}$. In what follows we define a single alternative by $\textbf{i}= (i_{1},\dots,i_{K})$ where $i_k$ is the component of the $k$th attribute, $k=1,\dots,K$. Any utility (not observe) $\tilde{Y}_{n}(\textbf{i})$ of the single alternative $\textbf{i}= (i_{1},\dots,i_{K})$ subject to a block effect $\mu_n$ and a random error $\tilde{\varepsilon}_{n}$, which is assumed to be uncorrelated with constant variance and zero mean can be formalized by a general linear model
\begin{equation}\label{eq:1}
\begin{split}
\tilde{Y}_{n}(\textbf{i})&= \mu_{n}+\textbf{f}(\textbf{i})^{\top} \boldsymbol{\beta} + \tilde{\varepsilon}_{n},
\end{split}
\end{equation}
where the index $n$ denotes the $n$th presentation, $n=1,\dots,N$, and the alternative $\textbf{i}$ is chosen from a set $\mathcal{I}=\{1,\dots,v\}^K$. Here the vector of known regression functions $\textbf{f}$ describe the form of the functional relationship between the alternative $\textbf{i}$ and the corresponding mean response $E(\tilde{Y}_{n}(\textbf{i}))=\mu_n+\textbf{f}(\textbf{i})^{\top}\boldsymbol{\beta}$, and $\boldsymbol{\beta}$ is the unknown parameter vector of interest. Usually in order to make statistical inference on the unknown parameters several pairs are presented to get rid of the influence of the block effect $\mu_{n}$ due to a variety of unobservable influences. Then actual differences of the latent utilities are observed for the alternatives presented in a pair. \par

More specifically, unlike in standard experimental designs where there is a possibility of only a single or direct observation, in paired comparison experiments the utilities for the alternatives are usually not directly observed. Only observations $Y_n(\textbf{i},\textbf{j})=\tilde{Y}_{n}(\textbf{i})-\tilde{Y}_{n}(\textbf{j})$ are available for comparing pairs $(\textbf{i},\textbf{j})$ of alternatives $\textbf{i}$ and $\textbf{j}$ which are chosen from the design region $\mathcal{X}=\mathcal{I}\times \mathcal{I}$. In that case the utilities for the alternatives are properly described by the linear paired comparison model 
\begin{equation} \label{eq:2}
\begin{split}
Y_n(\textbf{i, j})=(\textbf{f}(\textbf{i})-\textbf{f}(\textbf{j}))^{\top}\boldsymbol\beta+\varepsilon_{n}, \\
\end{split}
\end{equation}
where $\textbf{f}(\textbf{i})-\textbf{f}(\textbf{j})$ is the derived regression function and the random errors $\varepsilon_{n}(\textbf{i}, \textbf{j})=\tilde{\varepsilon}_{n}(\textbf{i})-\tilde{\varepsilon}_{n}(\textbf{j})$ associated with the different pairs $(\textbf{i}, \textbf{j})$ are assumed to be uncorrelated with constant variance and zero mean. Here, the block effects $\mu_{n}$ are immaterial. \par

The performance of the statistical analysis based on a paired comparison experiment depends on the pairs in the preference task that are presented. The choice of such pairs $(\textbf{i}_1,\textbf{j}_1),\dots,(\textbf{i}_N,\textbf{j}_N)$ is called a design of size $N$. 
The quality of such a design is measured by its information matrix
\begin{equation}\label{eq:3}
\textbf{M}((\textbf{i}_1, \textbf{j}_1),\dots,(\textbf{i}_N, \textbf{j}_N))=\sum_{n=1}^{N}\textbf{M}((\textbf{i}_n, \textbf{j}_n)),
\end{equation}
where $\textbf{M}((\textbf{i},\textbf{j}))=(\textbf{f}(\textbf{i})-\textbf{f}(\textbf{j}))(\textbf{f}(\textbf{i})-\textbf{f}(\textbf{j}))^{\top}$ is the so-called elemental information of a single pair $(\textbf{i},\textbf{j})$.\par

In this paper we restrict our attention to approximate or continuous designs $\xi$ \citep[e.g., see][]{kiefer1959optimum} which are defined as discrete probability measures on the design region $\mathcal{X}$ of all pairs $(\textbf{i},\textbf{j})$. Moreover, every approximate design $\xi$ which assigns only rational weigths $\xi(\textbf{i},\textbf{j})$ to all pairs $(\textbf{i},\textbf{j})$ in its support points can be realized as an exact design $\xi_N$ of size $N$ consisting of the pairs $(\textbf{i}_1,\textbf{j}_1),\dots,(\textbf{i}_N,\textbf{j}_N)$. Note that for an exact design $\xi_N=((\textbf{i}_1,\textbf{j}_1),\dots,(\textbf{i}_N,\textbf{j}_N))$ the normalized information matrix $\textbf{M}(\xi_N)$ coincides with the information matrix $\textbf{M}(\xi)$ of the corresponding approximate design $\xi$.\par

Optimality criteria for approximate designs $\xi$ are functionals of $\textbf{M}(\xi)$. As a scalar measure of design quality here we consider the criterion of $D$-optimality. An approximate design $\xi^{\ast}$ is $D$-optimal if it maximizes the determinant of the information matrix, that is, if $\mathrm{det} \textbf{M}(\xi^{\ast})$ $\geq$ $\mathrm{det} \textbf{M}(\xi)$ for every approximate design $\xi$ on $\mathcal{X}$.
\vspace{3mm}\\
\noindent
\textbf{Example 1. } One-way layout (general levels)\par\noindent
For illustrative purposes we first consider the situation of just a single-attribute ($K=1$) \citep[e.g., see][]{grasshoff2004optimal} which may vary only over $v$ levels, and adopt the standard parameterization of effects-coding. 
In this setting, the effects of each single level $i=1,\dots,v$ has parameters $\beta_{i}$. Hence,
\begin{equation}\label{eq:5}
\begin{split}
\tilde{Y}_{n}(i)&=\mu_{n}+ \beta_{i}+\tilde{\epsilon}_{n},
\end{split}
\end{equation}
with $i\in\mathcal{I}=\{1,\dots,v\}$, where $\mu_{n}$ denotes the block effect, $n=1,\dots,N$, and $\tilde{\epsilon}_{n}$ the random error is assumed to be uncorrelated with constant variance and zero mean. For effects-coding the regression function $\textbf{f}$ is given by $\textbf{f}(i)=\textbf{e}_i$, $i=1,\dots,v-1$ and $\textbf{f}(v)=-\mathbf{1}_{v-1}$, respectively, where $\textbf{e}_i$ is the $i$th unit vector of length $v-1$ and $\mathbf{1}_{m}$ denotes a vector of length $m$ with all entries equal to $1$. This parameterization is an indication that suitable identifiability condition has to be imposed on the effects, $\sum_{i=1}^{v} \beta_{i} = 0$. In particular, for this identifiability condition the parameter relating to the last level $v$ can be recovered from the other levels, $ \beta_{v}=-\sum_{i=1}^{v-1} \beta_{i}$. Hence,
\begin{equation}\label{eq:6}
\begin{split}
\tilde{Y}_{n}(i)&=\mu_{n}+ \mathbf{f}(i)^{\top}\boldsymbol{\beta}+\tilde{\epsilon}_{n},
\end{split}
\end{equation}
where the reduced parameter vector $\boldsymbol{\beta}=(\beta_1,\dots,\beta_{v-1})^\top$.
\par
Then for paired comparisons an observation of the effects $\beta_i - \beta_j$ of level $i$ compared to level $j$ can be characterized by the response
\begin{equation}\label{eq:7}
\begin{split}
Y_{n}(i,j)&= (\mathbf{f}(i)-\mathbf{f}(j))^{\top}\boldsymbol{\beta}+\varepsilon_n = \beta_i-\beta_j+\varepsilon_n,
\end{split}
\end{equation}
where $\textbf{f}(i,j) = \textbf{e}_i-\textbf{e}_j$, $\textbf{f}(i,v) = \textbf{e}_i+\textbf{1}_{v-1}$, $ \textbf{f}(v,i)=-\textbf{f}(i,v) $, for $i,j = 1,\dots,v-1$ and $\textbf{f}(v,v) =\mathbf{0}$. \par
Note that the information matrix $\textbf{M}((i,j))=\frac{2}{v-1}(\textbf{Id}_{v-1}+\textbf{1}_{v-1}\textbf{1}^{\top}_{v-1})$ for $i \neq j$ where $\mathbf{Id}_m$ denotes the $m$-dimensional identity matrix, while $\textbf{M}((i,i))=\mathbf{0}$ for $i = j$. Because pair ($i,i$) with identical levels results in zero information, it can be neglected. Hence, only pair ($i,j$) with different levels should be used and that, in particular, the design $\bar{\xi}$ which assigns equal weight $1/(v(v-1))$ to each of the pair $(i,j)$, $i \neq j$ is optimal with resulting information matrix
\begin{equation}\label{eq:8}
\begin{split}
\textbf{M}(\bar{\xi})=\frac{2}{v-1}(\textbf{Id}_{v-1}+\textbf{1}_{v-1}\textbf{1}^{\top}_{v-1}).
\end{split}
\end{equation}
The corresponding information matrix $\mathbf{M}(\bar{\xi})$ has an inverse of the form
\begin{equation}
\begin{split}
\mathbf{M}(\bar{\xi})^{-1} =\frac{v-1}{2}(\mathbf{Id}_{v-1}-\frac{1}{v}\mathbf{1}_{v-1}\mathbf{1}^{\top}_{v-1}).
\end{split}
\end{equation}
This design $\bar{\xi}$ will serve as a starting point for constructing optimal designs in situations with more than one attribute later on. In particular, this concept developed for the single-attribute will be extended later on for describing the main-effects as well as the two- and the three-attribute interactions in the case of a couple of, say, $K$ attributes $k=1,\dots,K$. \par

It is worthwhile mentioning that under the indifference assumption of equal choice probabilities \citep[see e.g.][]{grossmann2002advances} the designs considered in this paper carry over to the \citet{bradley1952rank} type choice experiments in which the probability of choosing $\textbf{i}$ from the pair $(\textbf{i},\textbf{j})$ is given by $\exp[\textbf{f}(\textbf{i})^\top\boldsymbol\beta]/(\exp[\textbf{f}(\textbf{i})^\top\boldsymbol\beta]+\exp[\textbf{f}(\textbf{j})^\top\boldsymbol\beta])$. Specifically, this assumption simplifies the information matrix of the multinomial logit model which becomes proportional to the information matrix in the linear paired comparison model \citep[see][]{grossmann2015handbook,singh2015optimal}.

\section{Second-order interactions model}
In applications one may be interested in the utility estimates of both the main effects and interactions between the levels of the attributes. For that setting optimal designs have been derived \citep{van1987optimal, grasshoff2003optimal} in a two-attribute interactions setup. This paper considers a three-attribute interactions model. Corresponding results for the particular case of binary attributes can be found \citep{nyarko2019optimal}.\par
Following \citet{nyarko2019optimal}, we first start with the situation of full profiles. In that case each alternative is represented by level combinations in which all attributes are involved. For such alternatives we denote by $\textbf{i}=(i_1,\dots,i_K)$ and $\textbf{j}=(j_1,\dots,j_K)$ the first alternative and the second alternative, respectively, which are both elements of the set $\mathcal{I}=\{1,\dots,v\}^{K}$ where $1$ and $v$ represent the first and last level of each $k$th attribute, $k=1,\dots,K$. Here $(\textbf{i},\textbf{j})$ is an ordered pair of alternatives $\textbf{i}$ and $\textbf{j}$ which is chosen from the design region $\mathcal{X}=\mathcal{I}\times\mathcal{I}$. 
Note that for each attribute $k$ the corresponding regression functions $\textbf{f}_k=\textbf{f}:\mathcal{X}\rightarrow\textbf{R}^{v-1}$ coincide with that of the one-way attribute introduced in Example $1$.\par

In the presence of main effects up to three-attribute interactions direct responses $\tilde{Y}_{n}$ at alternative $\textbf{i}=(i_1,\ldots,i_K)$ can be modeled as
\begin{align}\label{eq:full_direct}
\tilde{Y}_{n}(\textbf{i})&=\mu_n+\sum_{k=1}^{K}\textbf{f}(i_k)^\top \boldsymbol{\beta}_k+\sum_{k<\ell}(\textbf{f}(i_k)\otimes \textbf{f}(i_\ell))^\top\boldsymbol{\beta}_{k\ell} \nonumber\\
&\qquad~~+\sum_{k<\ell<m}(\textbf{f}(i_k)\otimes \textbf{f}(i_\ell)\otimes \textbf{f}(i_m))^\top\boldsymbol{\beta}_{k\ell m}+\tilde{\varepsilon}_{n},
\end{align}
for full profiles, where $\otimes$ denotes the Kronecker product of vectors or matrices, $\boldsymbol{\beta}_k=(\beta_{i_k}^{(k)})_{i_k=1,\dots,v-1}$ denotes the main effect of the $k$th attribute, $\boldsymbol{\beta}_{k\ell}=(\beta_{i_ki_{\ell}}^{(k\ell)})_{i_k=1,\dots,v-1,~i_{\ell}=1,\dots,v-1}$ is the two-attribute interaction of the $k$th and $\ell$th attribute, and $\boldsymbol{\beta}_{k\ell m}=(\beta_{i_ki_{\ell}i_m}^{(k\ell m)})_{i_k=1,\dots,v-1,~i_{\ell}=1,\dots,v-1,~i_{m}=1,\dots,v-1}$ is the three-attribute interaction of the $k$th, $\ell$th and $m$th attribute.
The vectors $(\boldsymbol{\beta}_k)_{1\leq k\leq K}$ of main effects, $(\boldsymbol{\beta}_{k\ell})_{1\leq k<\ell\leq K}$ of two-attribute interactions and $(\boldsymbol{\beta}_{k\ell m})_{1\leq k<\ell<m\leq K}$ of three-attribute interactions have dimensions $p_1=K(v-1)$, $p_2=(1/2)K(K-1)(v-1)^{2}$ and $p_3=(1/6)K(K-1)(K-2)(v-1)^{3}$, respectively.
Hence the parameter vector 
\begin{align}\label{eqtg:4.}
\boldsymbol{\beta}=((\boldsymbol{\beta}_k)_{k=1,\dots,K}^\top,(\boldsymbol{\beta}_{k\ell})_{k<\ell}^\top,(\boldsymbol{\beta}_{k\ell m})^\top_{k<\ell<m})^\top,
\end{align}
 has dimension $p=p_1+p_2+p_3=K(v-1)(1+(1/6)(K-1)(v-1)(3+(K-2)(v-1)))$. 
These vector of parameters sum up to the complete parameter vector $\boldsymbol{\beta}\in\textbf{R}^p$. Here the corresponding $p$-dimensional vector of regression functions $\textbf{f}:\mathcal{X}\rightarrow\textbf{R}^p$ is given by
\begin{align}\label{eq:10}
\textbf{f}(\textbf{i})=(\textbf{f}(i_1)^\top,\dots,\textbf{f}(i_K)^\top,\textbf{f}(i_1)^\top\otimes\textbf{f}(i_2)^\top,\dots,\textbf{f}(i_{K-1})\otimes\textbf{f}(i_K)^\top,  \nonumber\\
\qquad \qquad \textbf{f}(i_1)^\top\otimes\textbf{f}(i_2)^\top\otimes\textbf{f}(i_3)^\top,\dots,\textbf{f}(i_{K-2})^\top\otimes\textbf{f}(i_{K-1})^\top\otimes\textbf{f} (i_K)^\top)^\top. 
\end{align}
Also here in $\textbf{f}(\textbf{i})$, the first $K$ components $\textbf{f}(i_{1}),\dots,\textbf{f}(i_{K})$ are associated with the main effects and have $p_1=K(v-1)$ parameters, the second components $\textbf{f}(i_1)\otimes\textbf{f}(i_2),\dots,\textbf{f}(i_{K-1})$ $\otimes\textbf{f}(i_K)$ are associated with the two-attribute interactions and have $p_2=(1/2)K(K-1)(v-1)^{2}$ parameters, and the remaining components $\textbf{f}(i_1)\otimes\textbf{f}(i_2)\otimes\textbf{f}(i_3),\dots,\textbf{f}(i_{K-2})\otimes\textbf{f}(i_{K-1})\otimes\textbf{f}(i_K)$ are associated with the three-attribute interactions and have $p_3=(1/6)K(K-1)(K-2)(v-1)^{3}$ parameters.\par
As was already pointed out, because of the limited cognitive to process information, in applications a preference task including many attributes may enhance respondent decisions that do not reflect their actual preferences. To overcome these behaviors, only partial profiles are presented  within a single paired comparison. Specifically, in a partial profile every pair consists of alternatives which are described by a predefined number $S$ of attributes, where the same set of attributes is used throughout both alternatives within a pair but with potentially different levels, and the remaining $K-S$ attributes are not shown and remain thus unspecified. The number $S$ of attributes used in a partial profile is called the profile strength. For instance, if in an experimental situation for a total number of attributes $K=11$ each at two levels, only $S=4$ of the attributes are shown, while the remaining $K-S=7$ attributes are not shown or officially set to $0$, then the $S=4$ of the attributes that are shown in the pairs of alternatives is the profile strength. \citet{green1974design} pointed out that in choosing a profile strength, it has to be taken into account: (a) the number of attributes to vary in each set of alternatives that will be presented to respondents; (b) the number of alternatives to present to respondents for evaluation; and (c) the type of utility (or paired comparison) model to apply in representing the respondent's evaluations. According to \citet{schwabe2003optimal}, typically choosing up to a maximum number of four attributes as representative of the profile strength while there may be twenty or more attributes is enough to reduce cognitive burden as frequently encountered in practice \citep[see][for example]{grossmann2017partial}.
\par

For a partial profile a direct observation may be described by model (\ref{eq:full_direct}) when summation is taken only over those $S$ attributes contained in the describing subset.
This requires that the profile strength $S\geq3$ is needed to ensure identifiability of the three-attribute interactions. In what follows, we introduce an additional level $i_k=0$, which indicates that the corresponding $k$th attribute is not present in the partial profile. The corresponding regression functions are extended to $\mathbf{f}(0)=\mathbf{0}$. With this convention a direct observation can be described by (\ref{eq:full_direct}) even for a partial profile $\textbf{i}$ from the set 
\begin{equation}\label{eq:12}
\begin{split}
\mathcal{I}^{(S)}=&\{\textbf{i};\ i_{k}\in\{1,\dots,v\} \textrm{ for $S$ components and} 
\\ 
&\qquad  
i_{k}=0 \textrm{ for $K-S$ components}\},
\end{split}
\end{equation}
of alternatives with profile strength $S$.
\par

For observations in linear paired comparisons the resulting model is given by
\begin{align}\label{eq:11}
Y_{n}(\textbf{i},\textbf{j})&=\sum_{k=1}^{K}(\textbf{f}(i_k)-\textbf{f}(j_k))^\top \boldsymbol{\beta}_k+\sum_{k<\ell}((\textbf{f}(i_k)\otimes \textbf{f}(i_\ell))-(\textbf{f}(j_k)\otimes \textbf{f}(j_\ell)))^\top\boldsymbol{\beta}_{k\ell} \nonumber\\
&\qquad~~+\sum_{k<\ell<m}((\textbf{f}(i_k)\otimes \textbf{f}(i_\ell)\otimes \textbf{f}(i_m))  \nonumber\\
&\qquad\qquad\qquad~~-(\textbf{f}(j_k)\otimes \textbf{f}(j_\ell)\otimes \textbf{f}(j_m)))^\top\boldsymbol{\beta}_{k\ell m}+\varepsilon_{n}.
\end{align}
The design region in this case can be specified as

\begin{equation}\label{eq:12}
\begin{split}
\mathcal{X}^{(S)}&=\{(\textbf{i},\textbf{j});~i_{k}, j_{k}\in\{1,\dots,v\} \ \textrm{for $S$ components and} \\
&\qquad\qquad \ \  i_{k}= j_{k}=0~\textrm{for exactly $K-S$ components}\},
\end{split}
\end{equation}
for the set of partial profiles with profile strength $S$. Notice that for full profiles each pair in the design region consists of alternatives where all attributes are shown.

\section{Optimal designs}
In the present setting we consider optimal designs for the three-attribute interactions paired comparison model \eqref{eq:11} with corresponding regression functions $\textbf{f}(\textbf{i})$ given by \eqref{eq:10}. In what follows, we define $d$ as the comparison depth \citep[see][]{grasshoff2003optimal}, which describes the number of attributes in which the two alternatives presented differ, $d=1,\dots, S$ \citep[ see][p.~9,~for example]{eric2019optimal4444}.
\par

For this situation the design region $\mathcal{X}^{(S)}$ in \eqref{eq:12} can be partitioned into disjoint sets 
\begin{equation}\label{eq:13}
\mathcal{X}^{(S)}_{d}=\{(\textbf{i},\textbf{j})\in\mathcal{X}^{(S)};\ i_{k}\neq j_{k} \textrm{ for exactly $d$ components}\}.
\end{equation}
These sets constitute the orbits with respect to permutations of both the levels $i_k,j_k=1,\dots,v$ within the attributes as well as among attributes $k=1,\dots,K$, themselves. 
\par

Note that the $D$-criterion is invariant with respect to those permutations, which induce a linear reparameterization \citep[see][p.~17]{1996optimum}. 
As a result, it is sufficient to look for optimality in the class of invariant designs \citep[see][p.~282-284, for detailed discussion]{kiefer1959optimum}. 
\par

Denote by $\bar{\xi}_{d}$ the uniform approximate design which assigns equal weights to each individual distinct (pair) design points in $\mathcal{X}^{(S)}_{d}$, $\bar{\xi}_{d}(\textbf{i},\textbf{j})=\frac{1}{N_{d}}$ where $N_{d}={K \choose S}{S \choose d}v^S(v-1)^d$ is the number of the distinct design points. Notice that for the associated identical (pair) design points in $\mathcal{X}^{(S)}$ there is no information available. It can be demonstrated that in this case, the corresponding uniform design $\bar{\xi}_{d}$ on the set $\mathcal{X}^{(S)}_{d}$ has an information matrix having a block diagonal structure. To begin with, we first note that $\mathbf{M}=\frac{2}{v-1}(\mathbf{Id}_{v-1}+\mathbf{1}_{v-1}\mathbf{1}^{\top}_{v-1})$ is the information matrix of the one-way layout in \eqref{eq:8}. Here, $\mathbf{Id}_m$ is the identity matrix of order $m$ for every $m$. In the following, the two functions $h_1(d)$ and $h_2(d)$ are identical to the corresponding terms for the main-effects and two-attribute interactions models \citep[e.g., see][]{grasshoff2003optimal}. 
\newtheorem{lemma}{Lemma}
\begin{lemma}\label{lemma1}
Let $d\in\{0,\dots,S\}$. The uniform design $\bar{\xi}_{d}$ on the set $\mathcal{X}_{d}^{(S)}$ of comparison depth $d$ has diagonal information matrix
\begin{equation*}
\begin{split}
\mathbf{M}(\bar{\xi}_{d})=\begin{pmatrix} h_{1}(d)\mathbf{Id}_{p_1}\otimes\mathbf{M} & \mathbf{0}& \mathbf{0}\\
 \mathbf{0} & h_{2}(d)\mathbf{Id}_{p_2}\otimes\mathbf{M}\otimes\mathbf{M}&\mathbf{0}\\
 \mathbf{0} &\mathbf{0}&h_{3}(d)\mathbf{Id}_{p_3}\otimes\mathbf{M}\otimes\mathbf{M}\otimes\mathbf{M}\end{pmatrix}
 \end{split}  
\end{equation*} 
where
\begin{equation*}
\begin{split}
h_{1}(d)& =\frac{d}{K},~h_{2}(d) =\frac{d}{2vK(K-1)}(2Sv-2S-dv-v+2)~\mathrm{and}~\\
h_{3}(d)& =\frac{d}{4v^{2}K(K-1)(K-2)}(3S^2+3S^2v^2-6S^2v-3Sdv^2+3Sdv-6Sv^2 \\
&\qquad\qquad\qquad\qquad+15Sv-9S+d^2v^2+3dv^2-6dv +2v^2-6v+6). 
\end{split}
\end{equation*}
\end{lemma}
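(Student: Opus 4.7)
My plan is to exploit the invariance of $\bar{\xi}_d$ under the group generated by (i) permutations of levels within each attribute (acting on $\mathbf{f}_1$ by $\mathbf{f}_1(i)\mapsto\mathbf{f}_1(\pi(i))$ on $\{1,\dots,v\}$ and fixing $\mathbf{f}_1(0)=\mathbf{0}$) and (ii) permutations of the attribute labels. By the invariance remark preceding the lemma, $\mathbf{M}(\bar{\xi}_d)$ commutes with the induced action on the parameter space, so its structure must respect the decomposition of $\mathbf{f}$ into the main-effects, first-order, and second-order interaction blocks. The key identity $\frac{1}{v}\sum_{i=1}^v\mathbf{f}_1(i)=\mathbf{0}$, together with $\mathbf{f}_1(0)=\mathbf{0}$, then kills every sub-block in which some attribute $k$ appears with odd combined multiplicity in the row and column regression vectors, since averaging over level permutations of attribute $k$ replaces one of the $\mathbf{f}_1(i_k)$ factors by its mean $\mathbf{0}$. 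This reduction leaves only the diagonal sub-blocks indexed by a fixed $r$-tuple of distinct attributes ($r=1,2,3$) on both sides.

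Next I would compute those surviving sub-blocks. Conditioning on whether each attribute of the tuple is active and whether its levels coincide in the pair, the Kronecker structure of $\mathbf{f}$ factorizes the conditional expectation into a Kronecker product of $(v-1)\times(v-1)$ matrices, with each factor equal to $\mathbf{0}$ (inactive attribute, killing the whole term), to $U:=\frac{1}{v}(\mathbf{Id}_{v-1}+\mathbf{1}_{v-1}\mathbf{1}_{v-1}^\top)=\frac{v-1}{2v}\mathbf{M}$ (active, constant attribute), or to $\mathbf{M}$ (active, differing attribute). When several attributes on the tuple differ simultaneously, an auxiliary matrix $V:=-\frac{1}{v(v-1)}(\mathbf{Id}_{v-1}+\mathbf{1}_{v-1}\mathbf{1}_{v-1}^\top)=\mathrm{E}[\mathbf{f}_1(I)\mathbf{f}_1(J)^\top]$ for uniform distinct $(I,J)$ enters through terms of the form $2(U^{\otimes r}-V^{\otimes r})$. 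The all-constant-and-active case gives no contribution since the row and column vectors on the tuple then coincide.

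Aggregating uses hypergeometric weights: $\Pr(\text{tuple fully active})=\binom{K-r}{S-r}/\binom{K}{S}$, and conditionally the number of tuple-attributes lying in the differing set $D$ has the hypergeometric distribution with parameters $(S,d,r)$. Combining these probabilities with the per-case Kronecker contributions and rewriting everything as a scalar multiple of $\mathbf{M}^{\otimes r}$ yields $h_r(d)\mathbf{Id}_{p_r}\otimes\mathbf{M}^{\otimes r}$. For $r=1$ the single contributing case has probability $d/K$, giving $h_1(d)$ directly; for $r=2$, combining the one- and two-differing contributions and applying the identity $(d-1)(v-2)+2(S-d)(v-1)=2Sv-2S-dv-v+2$ recovers $h_2(d)$ cleanly.

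The main obstacle is the $r=3$ case. Three nontrivial subcases (one, two, or three of the three tuple-attributes differing) yield respectively $\mathbf{M}\otimes U\otimes U$ (and its two permutations), $(2U^{\otimes 2}-2V^{\otimes 2})\otimes U$ (and its two permutations), and $2(U^{\otimes 3}-V^{\otimes 3})$. Re-expressing each as a scalar multiple of $\mathbf{M}^{\otimes 3}$ (with numerators built from $(v-1)^2$, $(v-1)(v-2)$, and $(v-1)^3+1=v(v^2-3v+3)$, respectively), weighting by the hypergeometric probabilities $\frac{3d(S-d)(S-d-1)}{K(K-1)(K-2)}$, $\frac{3d(d-1)(S-d)}{K(K-1)(K-2)}$, and $\frac{d(d-1)(d-2)}{K(K-1)(K-2)}$, and collecting like powers of $d$, $v$, $S$ should reduce, after a lengthy but routine polynomial expansion, to the stated cubic $h_3(d)$. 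This expansion is where the genuine calculation of the lemma lies; the rest of the argument is essentially invariance bookkeeping.
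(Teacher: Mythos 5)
Your proposal is correct and follows essentially the same route as the paper's proof: you split each three-attribute block according to how many of the tuple's attributes differ, identify the per-case contributions as the multiples $(v^2-3v+3)/(4v^2)$, $(v-1)(v-2)/(4v^2)$ and $(v-1)^2/(4v^2)$ of $\mathbf{M}\otimes\mathbf{M}\otimes\mathbf{M}$, and weight them by the factors $d(d-1)(d-2)$, $3d(d-1)(S-d)$ and $3d(S-d)(S-d-1)$ over $K(K-1)(K-2)$, which is exactly the paper's counting of pairs in $\mathcal{X}^{(S)}_d$ recast as expectations, with your $U$ and $V$ corresponding to $\tfrac1v\sum_i\mathbf{f}_1(i)\mathbf{f}_1(i)^{\top}=\tfrac{v-1}{2v}\mathbf{M}$ and $\tfrac{1}{v(v-1)}\sum_{i\neq j}\mathbf{f}_1(i)\mathbf{f}_1(j)^{\top}=-\tfrac{1}{2v}\mathbf{M}$. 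Your level-permutation averaging argument merely makes explicit the off-diagonal vanishing that the paper attributes in one line to the effects-type coding, so apart from the deferred (and indeed routine) polynomial expansion to the stated cubic, the two proofs coincide.
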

\begin{proof}
The quantities $h_1(d)$ and $h_2(d)$ are identical to the terms in \citet{grasshoff2003optimal}. For $h_3(d)$ we proceed by first noting that the auxiliary terms $\sum^{v}_{i=1}\textbf{f}(i)\textbf{f}(i)^{\top}=\frac{v-1}{2}\textbf{M}$ and $\sum_{i\neq j}\textbf{f}(i)\textbf{f}(j)^{\top}=-\frac{v-1}{2}\textbf{M}$. \par
For the three-attribute interactions, we consider attributes $k$, $\ell$ and $m$, say, and distinguish between pairs in which all three attributes are distinct,  pairs in which two of these attributes $k$ and $\ell$, say, have distinct levels in the alternatives while the same level is presented in both alternatives for the remaining attribute and, finally, pairs in which only one of the attributes, say, $k$ has distinct levels in the alternatives while the same level is presented in both alternatives for the two remaining attributes:\vspace*{-4mm}
\begin{align}\label{eq:4.32}
&\sum_{i_{k}\neq j_{k}}\sum_{i_{\ell}\neq j_{\ell}}\sum_{i_{m}\neq j_{m}}(\textbf{f}(i_{k})\otimes\textbf{f}(i_{\ell})\otimes\textbf{f}(i_{m})-\textbf{f}(j_{k})\otimes\textbf{f}(j_{\ell})\otimes\textbf{f}(j_{m})) \nonumber\\
&\qquad\qquad\qquad\cdot(\textbf{f}(i_{k})\otimes\textbf{f}(i_{\ell})\otimes\textbf{f}(i_{m})-\textbf{f}(j_{k})\otimes\textbf{f}(j_{\ell})\otimes\textbf{f}(j_{m}))^{\top} \nonumber\\
&=\sum^{v}_{i_{k}=1}\sum_{j_{k}\neq i_{k}}\sum^{v}_{i_{\ell}=1}\sum_{j_{\ell}\neq i_{\ell}}\sum^{v}_{i_{m}=1}\sum_{j_{m}\neq i_{m}}(\textbf{f}(i_{k})\textbf{f}(i_{k})^{\top}\otimes\textbf{f}(i_{\ell})\textbf{f}(i_{\ell})^{\top}\otimes\textbf{f}(i_{m})\textbf{f}(i_{m})^{\top}  \nonumber\\
&\qquad+\textbf{f}(j_{k})\textbf{f}(j_{k})^{\top}\otimes\textbf{f}(j_{\ell})\textbf{f}(j_{\ell})^{\top}\otimes\textbf{f}(j_{m})\textbf{f}(j_{m})^{\top} \nonumber\\
&\qquad-\textbf{f}(i_{k})\textbf{f}(j_{k})^{\top}\otimes\textbf{f}(i_{\ell})\textbf{f}(j_{\ell})^{\top}\otimes\textbf{f}(i_{m})\textbf{f}(j_{m})^{\top}   \nonumber\\
&\qquad-\textbf{f}(j_{k})\textbf{f}(i_{k})^{\top}\otimes\textbf{f}(j_{\ell})\textbf{f}(i_{\ell})^{\top}\otimes\textbf{f}(j_{m})\textbf{f}(i_{m})^{\top})  \nonumber\\
&=2(v-1)^{3}\sum^{v}_{i_{k}=1}\textbf{f}(i_{k})\textbf{f}(i_{k})^{\top}\otimes\sum^{v}_{i_{\ell}=1}\textbf{f}(i_{\ell})\textbf{f}(i_{\ell})^{\top}\otimes\sum^{v}_{i_{m}=1}\textbf{f}(i_{m})\textbf{f}(i_{m})^{\top} \nonumber\\
&\qquad-2\sum_{i_{k}\neq j_{k}}\textbf{f}(i_{k})\textbf{f}(j_{k})^{\top}\otimes\sum_{i_{\ell}\neq j_{\ell}}\textbf{f}(i_{\ell})\textbf{f}(j_{\ell})^{\top}\otimes\sum_{i_{m}\neq j_{m}}\textbf{f}(i_{m})\textbf{f}(j_{m})^{\top}   \nonumber\\
&=\frac{1}{4}v(v-1)^{3}(v^{2}-3v+3)\textbf{M}\otimes\textbf{M}\otimes\textbf{M},
\end{align}
also
\begin{align}\label{eq:4.33}
&\sum_{i_{k}\neq j_{k}}\sum_{i_{\ell}\neq j_{\ell}}\sum_{i_{m}= j_{m}}(\textbf{f}(i_{k})\otimes\textbf{f}(i_{\ell})\otimes\textbf{f}(i_{m})-\textbf{f}(j_{k})\otimes\textbf{f}(j_{\ell})\otimes\textbf{f}(j_{m}))   \nonumber\\
&\qquad\qquad\qquad\cdot(\textbf{f}(i_{k})\otimes\textbf{f}(i_{\ell})\otimes\textbf{f}(i_{m})-\textbf{f}(j_{k})\otimes\textbf{f}(j_{\ell})\otimes\textbf{f}(j_{m}))^{\top}\nonumber\\
&=\sum^{v}_{i_{k}=1}\sum_{j_{k}\neq i_{k}}\sum^{v}_{i_{\ell}=1}\sum_{j_{\ell}\neq i_{\ell}}\sum^{v}_{i_{m}=1}\sum_{j_{m}= i_{m}}(\textbf{f}(i_{k})\textbf{f}(i_{k})^{\top}\otimes\textbf{f}(i_{\ell})\textbf{f}(i_{\ell})^{\top}\otimes\textbf{f}(i_{m})\textbf{f}(i_{m})^{\top}  \nonumber\\
&\qquad+\textbf{f}(j_{k})\textbf{f}(j_{k})^{\top}\otimes\textbf{f}(j_{\ell})\textbf{f}(j_{\ell})^{\top}\otimes\textbf{f}(j_{m})\textbf{f}(j_{m})^{\top}   \nonumber\\
&\qquad-\textbf{f}(i_{k})\textbf{f}(j_{k})^{\top}\otimes\textbf{f}(i_{\ell})\textbf{f}(j_{\ell})^{\top}\otimes\textbf{f}(i_{m})\textbf{f}(j_{m})^{\top}  \nonumber\\
&\qquad-\textbf{f}(j_{k})\textbf{f}(i_{k})^{\top}\otimes\textbf{f}(j_{\ell})\textbf{f}(i_{\ell})^{\top}\otimes\textbf{f}(j_{m})\textbf{f}(i_{m})^{\top})  \nonumber\\
&=2(v-1)^{2}\sum^{v}_{i_{k}=1}\textbf{f}(i_{k})\textbf{f}(i_{k})^{\top}\otimes\sum^{v}_{i_{\ell}=1}\textbf{f}(i_{\ell})\textbf{f}(i_{\ell})^{\top}\otimes\sum^{v}_{i_{m}=1}\textbf{f}(i_{m})\textbf{f}(i_{m})^{\top}   \nonumber\\
&\qquad-2\sum_{i_{k}\neq j_{k}}\textbf{f}(i_{k})\textbf{f}(j_{k})^{\top}\otimes\sum_{i_{\ell}\neq j_{\ell}}\textbf{f}(i_{\ell})\textbf{f}(j_{\ell})^{\top}\otimes\sum_{i_{m}= j_{m}}\textbf{f}(i_{m})\textbf{f}(j_{m})^{\top}     \nonumber\\
&=\frac{1}{4}v(v-1)^{3}(v-2)\textbf{M}\otimes\textbf{M}\otimes\textbf{M},
\end{align}
and
\begin{align}\label{eq:4.34}
&\sum_{i_{k}\neq j_{k}}\sum_{i_{\ell}= j_{\ell}}\sum_{i_{m}= j_{m}}(\textbf{f}(i_{k})\otimes\textbf{f}(i_{\ell})\otimes\textbf{f}(i_{m})-\textbf{f}(j_{k})\otimes\textbf{f}(j_{\ell})\otimes\textbf{f}(j_{m})) \nonumber\\
&\qquad\qquad\qquad\cdot(\textbf{f}(i_{k})\otimes\textbf{f}(i_{\ell})\otimes\textbf{f}(i_{m})-\textbf{f}(j_{k})\otimes\textbf{f}(j_{\ell})\otimes\textbf{f}(j_{m}))^{\top}\nonumber\\
&=\sum^{v}_{i_{k}=1}\sum_{j_{k}\neq i_{k}}\sum^{v}_{i_{\ell}=1}\sum_{j_{\ell}= i_{\ell}}\sum^{v}_{i_{m}=1}\sum_{j_{m}= i_{m}}(\textbf{f}(i_{k})\textbf{f}(i_{k})^{\top}\otimes\textbf{f}(i_{\ell})\textbf{f}(i_{\ell})^{\top}\otimes\textbf{f}(i_{m})\textbf{f}(i_{m})^{\top}\nonumber\\
&\qquad+\textbf{f}(j_{k})\textbf{f}(j_{k})^{\top}\otimes\textbf{f}(j_{\ell})\textbf{f}(j_{\ell})^{\top}\otimes\textbf{f}(j_{m})\textbf{f}(j_{m})^{\top}\nonumber \\
&\qquad-\textbf{f}(i_{k})\textbf{f}(j_{k})^{\top}\otimes\textbf{f}(i_{\ell})\textbf{f}(j_{\ell})^{\top}\otimes\textbf{f}(i_{m})\textbf{f}(j_{m})^{\top}\nonumber\\
&\qquad-\textbf{f}(j_{k})\textbf{f}(i_{k})^{\top}\otimes\textbf{f}(j_{\ell})\textbf{f}(i_{\ell})^{\top}\otimes\textbf{f}(j_{m})\textbf{f}(i_{m})^{\top})\nonumber\\
&=2(v-1)\sum^{v}_{i_{k}=1}\textbf{f}(i_{k})\textbf{f}(i_{k})^{\top}\otimes\sum^{v}_{i_{\ell}=1}\textbf{f}(i_{\ell})\textbf{f}(i_{\ell})^{\top}\otimes\sum^{v}_{i_{m}=1}\textbf{f}(i_{m})\textbf{f}(i_{m})^{\top} \nonumber\\
&\qquad-2\sum_{i_{k}\neq j_{k}}\textbf{f}(i_{k})\textbf{f}(j_{k})^{\top}\otimes\sum_{i_{\ell}= j_{\ell}}\textbf{f}(i_{\ell})\textbf{f}(j_{\ell})^{\top}\otimes\sum_{i_{m}= j_{m}}\textbf{f}(i_{m})\textbf{f}(j_{m})^{\top}   \nonumber\\
&=\frac{1}{4}v(v-1)^{3}\textbf{M}\otimes\textbf{M}\otimes\textbf{M},  
\end{align}
respectively. \par

Now for the given attributes $k$, $\ell$ and $m$ the pairs with distinct levels in the three attributes occur $\left(\begin{smallmatrix}K-3 \\ S-3\end{smallmatrix}\right)\left(\begin{smallmatrix}S-3 \\ d-3\end{smallmatrix}\right)v^{S-3}(v-1)^{d-3}$ times in $\mathcal{X}^{(S)}_d$, while those which differ in two attributes occur $\left(\begin{smallmatrix}3 \\ 2\end{smallmatrix}\right)\left(\begin{smallmatrix}K-3 \\ S-3\end{smallmatrix}\right)\left(\begin{smallmatrix}S-3 \\ d-2\end{smallmatrix}\right)v^{S-3}(v-1)^{d-2}$ times in $\mathcal{X}^{(S)}_d$. Finally, those which differ only in one attribute occur $\left(\begin{smallmatrix}3 \\ 1\end{smallmatrix}\right)\left(\begin{smallmatrix}K-3 \\ S-3\end{smallmatrix}\right)\left(\begin{smallmatrix}S-3 \\ d-1\end{smallmatrix}\right)v^{S-3}(v-1)^{d-1}$
times in $\mathcal{X}^{(S)}_d$. As a consequence, the diagonal elements for the three-attribute interactions are given by

\begin{equation*}
\begin{split}
&\frac{1}{N_d}\bigg(\frac{1}{4}\left(\begin{matrix}K-3 \\ S-3\end{matrix}\right)\left(\begin{matrix}S-3 \\ d-3\end{matrix}\right)v^{S-2}(v-1)^{d}(v^{2}-3v+3)\textbf{M}\otimes\textbf{M}\otimes\textbf{M}\\
&\qquad+\frac{3}{4}\left(\begin{matrix}K-3 \\ S-3\end{matrix}\right)\left(\begin{matrix}S-3 \\ d-2\end{matrix}\right)v^{S-2}(v-1)^{d+1}(v-2)\textbf{M}\otimes\textbf{M}\otimes\textbf{M}\\
&\qquad+\frac{3}{4}\left(\begin{matrix}K-3 \\ S-3\end{matrix}\right)\left(\begin{matrix}S-3 \\ d-1\end{matrix}\right)v^{S-2}(v-1)^{d+2}(v-1)^{3}\textbf{M}\otimes\textbf{M}\otimes\textbf{M}\bigg)\\
&=\Big(\frac{d(d-1)(d-2)}{4v^{2}K(K-1)(K-2)}(v^{2}-3v+3)\\
&\qquad\qquad+\frac{3(S-d)d(d-1)}{4v^{2}K(K-1)(K-2)}(v-1)(v-2)   \\
&\qquad\qquad+\frac{3(S-d)(S-d-1)d}{4v^{2}K(K-1)(K-2)}(v-1)^{2}\Big)\textbf{M}\otimes\textbf{M}\otimes\textbf{M}\\
&=\frac{d}{4v^{2}K(K-1)(K-2)}(3S^2+3S^2v^2-6S^2v-3Sdv^2+3Sdv-6Sv^2  \\
&\qquad~~+15Sv-9S+d^2v^2+3dv^2-6dv +2v^2-6v+6) \textbf{M}\otimes\textbf{M}\otimes\textbf{M},
\end{split}
\end{equation*}
in the information matrix. It should be noted that the off-diagonal elements all vanish because the terms in the corresponding entries sum up to zero due to the effects-type coding.
\end{proof}

Note that for $d=0$ all pairs have identical attributes ($\textbf{i}=\textbf{j}$), $h_r(0)=0$ for $r=1,2,3$, and the information is zero. Hence, the comparison depth $d=0$ can be neglected. Further, invariant designs $\bar{\xi}$ can be written as a convex combination of uniform designs on the comparison depths $d$ with positive weights $w_{d}\geq 0$, $\sum^{S}_{d=1}w_{d}=1$. In this case the information matrix of the corresponding invariant design $\bar{\xi}$ also has the following diagonal structure: 
\begin{lemma}\label{lemma2}
Let $\bar{\xi}$ be an invariant design on $\mathcal{X}^{(S)}$. Then $\bar{\xi}$ has diagonal information matrix
\begin{multline*}
\begin{split}
\mathbf{M}(\bar{\xi})=\begin{pmatrix} h_{1}(\bar{\xi})\mathbf{Id}_{p_1}\otimes\mathbf{M} & \mathbf{0} & \mathbf{0} \\
 \mathbf{0} & h_{2}(\bar{\xi})\mathbf{Id}_{p_2}\otimes\mathbf{M}\otimes\mathbf{M}&\mathbf{0}\\
\mathbf{0} &\mathbf{0}&h_{3}(\bar{\xi})\mathbf{Id}_{p_3}\otimes\mathbf{M}\otimes\mathbf{M}\otimes\mathbf{M}\end{pmatrix}
 \end{split}
\end{multline*}
where $h_{r}(\bar{\xi})=\sum_{d=1}^Sw_dh_r(d)$, $r=1,2,3$.  
\end{lemma}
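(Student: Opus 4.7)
The plan is to combine the representation of invariant designs as convex combinations of the uniform orbit designs $\bar{\xi}_d$ with the linearity of the information-matrix functional and the explicit block form already established in Lemma~\ref{lemma1}. The statement of the lemma is essentially a propagation of Lemma~\ref{lemma1} through a convex combination, so there is no new combinatorial content; the work is in verifying that each ingredient cooperates cleanly.

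First, I would justify that any invariant design $\bar{\xi}$ on $\mathcal{X}^{(S)}$ admits the decomposition $\bar{\xi}=\sum_{d=1}^S w_d\bar{\xi}_d$ with weights $w_d=\bar{\xi}(\mathcal{X}^{(S)}_d)\geq 0$ and $\sum_{d=1}^S w_d=1$. This relies on the observation already recorded in the excerpt that the sets $\mathcal{X}^{(S)}_d$ are the orbits of the symmetry group generated by permutations of the levels within each attribute and of the attributes themselves, so that an invariant probability measure must be uniform on each orbit and is determined by the total mass it assigns to each orbit. The case $d=0$ contributes nothing because $\mathbf{M}(\bar{\xi}_0)=\mathbf{0}$, as pointed out after Lemma~\ref{lemma1}, so any mass on the zero comparison depth may be harmlessly discarded or absorbed. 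Next, since the elemental information $\mathbf{M}((\mathbf{i},\mathbf{j}))=(\mathbf{f}(\mathbf{i})-\mathbf{f}(\mathbf{j}))(\mathbf{f}(\mathbf{i})-\mathbf{f}(\mathbf{j}))^\top$ depends only on the pair, the map $\xi\mapsto\mathbf{M}(\xi)$ from \eqref{eq:4} is linear in the weights, and therefore $\mathbf{M}(\bar{\xi})=\sum_{d=1}^S w_d\mathbf{M}(\bar{\xi}_d)$.

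Substituting the block-diagonal form from Lemma~\ref{lemma1} into this convex combination yields the claimed structure of $\mathbf{M}(\bar{\xi})$ directly: each of the three blocks is a scalar multiple of a fixed matrix (namely $\mathbf{Id}_{p_r}\otimes\mathbf{M}^{\otimes r}$ for $r=1,2,3$), so the summation carries over to the scalar factors alone and produces $h_r(\bar{\xi})=\sum_{d=1}^S w_d h_r(d)$. The off-diagonal blocks remain zero because a convex combination of zero blocks is zero. The only non-routine point is the identification in the first step of invariant designs with orbit-uniform mixtures; the remainder is pure linearity and requires no further calculation.
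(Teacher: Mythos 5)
Your proposal is correct and follows essentially the same route as the paper, which proves Lemma~\ref{lemma2} implicitly by the very observation you formalize: an invariant design is a convex combination $\sum_{d} w_d\bar{\xi}_d$ of the uniform orbit designs, and linearity of the information matrix combined with the block-diagonal form of Lemma~\ref{lemma1} immediately gives $h_r(\bar{\xi})=\sum_{d=1}^S w_d h_r(d)$. Your additional remarks on orbit-uniformity of invariant measures and the vanishing contribution of depth $d=0$ are consistent with the paper's own discussion and add no divergence in method.
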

We now consider optimal designs for the main effects, the two-attribute interaction and the three-attribute interaction terms in the corresponding information matrix $\mathbf{M}(\bar{\xi}_{d})$ by maximizing the diagonal entries $h_{r}(d)$ for $r=1,2,3$. The resulting designs optimize every invariant design criterion \citep[see][]{nyarko2019optimal}. In the following, we mention that the optimal designs of Results~\ref{theorem1} and \ref{theorem2} can be recovered from the results of \citet{grasshoff2003optimal} for two-attribute interactions model. However, the result obtain for the three-attribute interactions is novel. Notice that $D$-optimality, which is discussed in \citep[][Section 4]{kiefer1959optimum}, refers to minimizing the determinant $\det \mathbf{M}(\bar{\xi}_d)^{-1}$ of the parameter estimates. In particular, $D$-optimal of the parameter vector of the main effects, two and three-attribute interaction effects as before is $p_1=K(v-1), p_2=(1/2)K(K-1)(v-1)^{2}$ and $p_3=(1/6)K(K-1)(K-2)(v-1)^{3}$, respectively.
 \newtheorem{result}{Result}
\begin{result}\label{theorem1}
The uniform design $\bar{\xi}_{S}$ on the largest possible comparison depth $S$ is $D$-optimal for the vector of main effects $(\boldsymbol{\beta}_{1}$ $\dots,$ $\boldsymbol{\beta}_{K})^{\top}$.
\end{result}
This means that for the main effects only those pairs of alternatives should be used which differ in all attributes subject to the profile strength.\par
For two-attribute interactions the number of the attributes subject to the profile strength $S$ with distinct levels for the uniform design $\bar{\xi}_{S}$ does not provide enough information. For example, the corresponding function $h_2(S)=0$ for the case $v=2$. As a consequence, only those pairs of alternatives should be used which differ in approximately half of the profile strength. In particular, one has to consider the following formula for calculating the optimal intermediate comparison depth $d^{\ast}_1=S-1-\begin{bmatrix}\frac{S-2}{v}\end{bmatrix}$ where $[q]$ denotes the integer part of the decimal expansion for $q$, satisfying $[q]\leq q< [q]+1$. 
\begin{result}\label{theorem2}
The uniform design $\bar{\xi}_{d^*_1}$ is $D$-optimal for the vector of two-attribute interaction effects $(\boldsymbol{\beta}_{k\ell})_{k<\ell}^{\top}$.
\end{result}
In the following Table \ref{Tab4.4h3PPROFILE}, we note that the values of $d^{\ast}_2$ were obtained by first calculating the values of $h_3(d)$ and determining the maximum. It is worthwhile mentioning that for very moderate values of $v$ ($v=2$, for example) the optimal comparison depth $d^{\ast}_2=S$ but this is not true for the case when $S=K=3$. Moreover, for sufficiently large values of $v$ ($v=20$, for example) the optimal comparison depth $d^*_2=S-2$.
\newtheorem{theorem}{Theorem}
\begin{theorem}\label{thrm22}
$\mathrm{(a)}$
For $S=3$ the uniform design $\bar{\xi}_{1}$ is $D$-optimal for the vector of three-attribute interaction effects $(\boldsymbol{\beta}_{k\ell m})_{k<\ell<m}^{\top}$.
\\
$\mathrm{(b)}$
For $S \geq 4$ the uniform design $\bar{\xi}_{d^*_2}$ is $D$-optimal for the vector of three-attribute interaction effects $(\boldsymbol{\beta}_{k\ell m})_{k<\ell<m}^{\top}$.
\end{theorem}

\begin{proof}
$\mathrm{(a)}$ Optimality is achieved when $h_3$ is maximized. For $S=3$ we get $h_3(1)=(v^2-2v+1)/4v^2p_3$ which establishes the result in this case.\par

$\mathrm{(b)}$ For $S\geq4$ note that $h_3$ is a polynomial of degree 3 in the comparison depth $d$ with positive leading coefficient. If we extend $h_3$ to a function defined on the real line, then it is point symmetric with respect to $((Sv-S-v+2)/v,h_3((Sv-S-v+2)/v))$ and attains its local minimum at $d_{3, \min}=(Sv-S-v+2)/v+\sqrt{9Sv+3v^2-9S-18v+18}/(3v)$. 
The result of $h_3(d_{3, \min})$ is equal to $c(3Sv^2-3Sv-3S-d_{3, \min}v^2-3v^2+6v)$ for some suitable positive constant $c$ depending on both $K$ and $S$. Inserting the solution for $d_{3, \min}$ into the last factor yields $3Sv^2-3Sv-3S-d_{3, \min}v^2-3v^2+6v=2Sv^2-2Sv-3S-v\sqrt{Sv+v^2/3-S-2v+2}-2v^2+4v>0$ for $S\geq 4$ and $v\geq 2$. Hence, $h_3(d_{3, \max})>0$ and, by symmetry, the value of $h_3$ at the local maximum $d_{3, \max}=(Sv-S-v+2)/v-\sqrt{9Sv+3v^2-9S-18v+18}/(3v)$ is equal to $h_3(d_{3, \max})<h_3(S)$ which proves that the global maximum of $h_3$ is attained at $d\leq S$ for $0\leq d\leq S$.
\end{proof}
\begin{table}[H]
\centering
\setlength\tabcolsep{0pt}
\caption{Values of the optimal comparison depths $d^*_2$ of the $D$-optimal uniform designs $\bar{\xi}_{d^{\ast}_2}$ for the three-attribute interactions in the case of full profiles $(S=K)$ and $v$-levels}\label{Tab4.4h3PPROFILE} 
\begin{tabular*}{\linewidth}{@{\extracolsep{\fill}}
    *{11}{D{.}{.}{3}}
                }
    \toprule
  & \multicolumn{9}{c}{$v$} \\
    \cmidrule(lr){2-11}
   K & 2       &    3     &  4 & 5  &  6 & 7    &    8& 9&10& 20                         \\
    \hline
3&1&1&  1 &1&1&1&1&1&1&1\\  
4&4&1&  2 &2&2&2&2&2&2&2\\   
5&5&2&  2 &3&3&3&3&3&3&3\\  
6&6&6& 3  &3&3&4&4&4&4&4\\ 
7&7&7& 7  &4&4&4&4&5&5&5\\ 
8&8&8& 8  &5&5&5&5&5&6&6\\    
9&9&9&9   &9&6&6&6&6&6&7\\    
10&10&10&10   &10&6&7&7&7&7&8\\  \bottomrule
\end{tabular*}
    \end{table}
    \par
The numerical results presented in Table \ref{Tab4.4h3PPROFILE} means that those pairs of alternatives should be used which differ in the comparison depth $d^*_2$ subject to the profile strength $S$. It is worth pointing out that by using a single comparison depth, it is possible to identify or estimate all the parameters of the corresponding three-attribute interactions model according to some criterion like the $D$-criterion. According to the Kiefer-Wolfowitz equivalence theorem \citep{kiefer1960equivalence}, a design $\xi^{\ast}$ is $D$-optimal if the variance function defined as 
$V((\textbf{i},\textbf{j}),\xi)=(\textbf{f}(\textbf{i})-\textbf{f}(\textbf{j}))^{\top}\textbf{M}(\xi)^{-1}(\textbf{f}(\textbf{i})-\textbf{f}(\textbf{j}))$, whenever $\textbf{M}(\xi)$ is non-singular is bounded by the number of model parameters defined as $p=p_1+p_2+p_3$, $V((\textbf{i},\textbf{j}),\xi^{\ast})\leq p$. This variance function plays an important role for the $D$-criterion. \par

For invariant designs $\bar{\xi}$ the value of the variance function evaluated at comparison depth $d$ may be denoted by $V(d,\bar{\xi})$, say, where $V(d,\bar{\xi})=V((\textbf{i},\textbf{j}),\bar{\xi})$ on $\mathcal{X}^{(S)}_{d}$. It can be shown that in this case the variance function $V(d,\bar{\xi})$ of the invariant design $\bar{\xi}$ has the following structure.
\begin{theorem}\label{thrm4}
For every invariant design $\bar{\xi}$ the variance function $V(d,\bar{\xi})$ is given by 
\begin{equation*}
\begin{split}
&\resizebox{1.0\hsize}{!}{$
V(d,\bar{\xi})=d(v-1)\Big(\frac{1}{h_{1}(\bar{\xi})}+\frac{v-1}{4vh_{2}(\bar{\xi})}(2Sv-2S-dv-v+2)+\frac{(v-1)^2}{24v^2h_{3}(\bar{\xi})}\lambda(d)\Big),$}
\end{split}
\end{equation*}
where
\begin{align*}
\begin{split}
&\lambda(d)=3S^2+3S^2v^2-6S^2v-3Sdv^2+3Sdv-6Sv^2+15Sv-9S+d^2v^2 \\
&\qquad\qquad+3dv^2-6dv +2v^2-6v+6.
\end{split}
\end{align*}
\end{theorem}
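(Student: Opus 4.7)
The plan is to exploit the block-diagonal structure given by Lemma~\ref{lemma2}: inverting block by block, $\mathbf{M}(\bar\xi)^{-1}$ has diagonal blocks $h_{1}(\bar\xi)^{-1}\mathbf{Id}_{p_{1}}\otimes\mathbf{M}^{-1}$, $h_{2}(\bar\xi)^{-1}\mathbf{Id}_{p_{2}}\otimes\mathbf{M}^{-1}\otimes\mathbf{M}^{-1}$ and $h_{3}(\bar\xi)^{-1}\mathbf{Id}_{p_{3}}\otimes\mathbf{M}^{-1}\otimes\mathbf{M}^{-1}\otimes\mathbf{M}^{-1}$. Partitioning $\mathbf{f}(\mathbf{i})-\mathbf{f}(\mathbf{j})$ according to \eqref{eq:10} into its main-effects, first-order and second-order parts and using the Kronecker identity $(\mathbf{u}_{1}\otimes\cdots\otimes\mathbf{u}_{r})^{\top}(\mathbf{A}_{1}\otimes\cdots\otimes\mathbf{A}_{r})(\mathbf{v}_{1}\otimes\cdots\otimes\mathbf{v}_{r})=\prod_{t=1}^{r}\mathbf{u}_{t}^{\top}\mathbf{A}_{t}\mathbf{v}_{t}$ reduces the variance function to $V(d,\bar\xi)=V_{1}/h_{1}(\bar\xi)+V_{2}/h_{2}(\bar\xi)+V_{3}/h_{3}(\bar\xi)$, where each $V_{r}$ is a sum of $r$-fold products of scalar single-attribute inner products.

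A short direct calculation using $\mathbf{M}^{-1}=\tfrac{v-1}{2}(\mathbf{Id}_{v-1}-\tfrac{1}{v}\mathbf{1}_{v-1}\mathbf{1}_{v-1}^{\top})$ shows that the only scalars that arise are $\alpha:=\mathbf{f}_{1}(i)^{\top}\mathbf{M}^{-1}\mathbf{f}_{1}(i)=\tfrac{(v-1)^{2}}{2v}$ for every active level $i\in\{1,\dots,v\}$ and $\beta:=\mathbf{f}_{1}(i)^{\top}\mathbf{M}^{-1}\mathbf{f}_{1}(j)=-\tfrac{v-1}{2v}$ for distinct active levels (yielding $\alpha-\beta=\tfrac{v-1}{2}$), while all expressions involving a non-active coordinate $i_{k}=j_{k}=0$ vanish since $\mathbf{f}_{1}(0)=\mathbf{0}$. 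For any pair $(\mathbf{i},\mathbf{j})\in\mathcal{X}^{(S)}_{d}$ the $K$ attributes partition into $K-S$ non-active, $S-d$ equal-active and $d$ differing-active ones. Each single attribute contributes $2(\alpha-\beta)=v-1$ to the main-effects sum if it is differing and zero otherwise, giving $V_{1}=d(v-1)$. Each attribute pair $(k,\ell)$ contributes $2(a_{k}a_{\ell}-b_{k}b_{\ell})$, which is zero unless at least one component is differing, so summing over the $d(S-d)$ mixed pairs and the $\binom{d}{2}$ doubly-differing pairs with per-pair values $2\alpha(\alpha-\beta)$ and $2(\alpha^{2}-\beta^{2})$ respectively yields $V_{2}=\tfrac{d(v-1)^{2}}{4v}(2Sv-2S-dv-v+2)$ after elementary simplification.

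The bulk of the work is the second-order sum $V_{3}$. Every attribute triple $(k,\ell,m)$ contributes $2(a_{k}a_{\ell}a_{m}-b_{k}b_{\ell}b_{m})$, which is non-zero only when all three attributes are active and at least one of them is differing. This leaves exactly three configurations to enumerate: $d\binom{S-d}{2}$ triples of type (one differing, two equal) with per-triple value $2\alpha^{2}(\alpha-\beta)=\tfrac{(v-1)^{5}}{4v^{2}}$, $\binom{d}{2}(S-d)$ of type (two differing, one equal) with value $2\alpha(\alpha^{2}-\beta^{2})=\tfrac{(v-1)^{4}(v-2)}{4v^{2}}$, and $\binom{d}{3}$ of type (three differing) with value $2(\alpha^{3}-\beta^{3})=\tfrac{(v-1)^{3}(v^{2}-3v+3)}{4v^{2}}$. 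Factoring $d(v-1)^{3}/(24v^{2})$ out of the total expresses $V_{3}$ as $\tfrac{d(v-1)^{3}}{24v^{2}}\,\mu(d)$ with
\[
\mu(d)=3(v-1)^{2}(S-d)(S-d-1)+3(v-1)(v-2)(d-1)(S-d)+(v^{2}-3v+3)(d-1)(d-2),
\]
so that the theorem reduces to the polynomial identity $\mu(d)=\lambda(d)$. This final identity is the main obstacle: it is a routine but lengthy coefficient-wise expansion in $S$, $d$ and $v$, and collecting monomials reproduces term-by-term each of the thirteen summands displayed in $\lambda(d)$, completing the proof.
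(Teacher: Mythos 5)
Your proposal is correct and follows essentially the same route as the paper's proof: block-diagonal inversion of the information matrix from Lemma~\ref{lemma2}, reduction via the Kronecker product identity to single-attribute scalar products (your $\alpha=\tfrac{(v-1)^2}{2v}$, $\beta=-\tfrac{v-1}{2v}$ reproduce exactly the paper's case values $v-1$, $\tfrac{(v-1)^3}{2v}$, $\tfrac{(v-1)^2(v-2)}{2v}$, $\tfrac{(v-1)^5}{4v^2}$, $\tfrac{(v-1)^4(v-2)}{4v^2}$, $\tfrac{(v-1)^3(v^2-3v+3)}{4v^2}$), the same configuration counts $d$, $d(S-d)$, $\binom{d}{2}$, $d\binom{S-d}{2}$, $(S-d)\binom{d}{2}$, $\binom{d}{3}$, and the same concluding expansion $\mu(d)=\lambda(d)$. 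The only differences are cosmetic (you make the scalars explicit where the paper invokes a lemma of Gra{\ss}hoff et al.\ 2003, and $\lambda(d)$ has fourteen monomials rather than thirteen), so nothing further is needed.
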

\begin{proof}
First we note that
\begin{equation*}
\begin{split}
\mathbf{M}(\bar{\xi})^{-1}=\begin{pmatrix} \frac{1}{h_{1}(\bar{\xi})}\textbf{Id}_{p_1}&\mathbf{0}&\mathbf{0}\\
 \mathbf{0} & \frac{1}{h_{2}(\bar{\xi})}\mathbf{Id}_{p_2}&\mathbf{0}\\
 \mathbf{0}&\mathbf{0}&\frac{1}{h_{3}(\bar{\xi})}\mathbf{Id}_{p_3}\end{pmatrix},
 \end{split}
\end{equation*}
for the inverse of the information matrix of the design $\bar{\xi}$. \par
Now, by Lemma $2$ of \citet{grasshoff2003optimal} it is sufficient to note that for the $k$-th main effects the variance function is given by
\begin{align}\label{eq:4.36}
&(\textbf{f}(i_k)-\textbf{f}(j_k))^{\top}\textbf{M}^{-1}(\textbf{f}(i_k)-\textbf{f}(j_k)) =v-1.
\end{align}
Further for the regression function associated with the two-attribute interactions of the attributes $k$ and $\ell$, say, we obtain 
\begin{align}\label{eq:4.37}
&(\textbf{f}(i_{k})\otimes\textbf{f}(i_{\ell})-\textbf{f}(j_{k})\otimes\textbf{f}(j_{\ell}))^{\top}\textbf{M}^{-1}\otimes\textbf{M}^{-1}(\textbf{f}(i_{k})\otimes\textbf{f}(i_{\ell})-\textbf{f}(j_{k})\otimes\textbf{f}(j_{\ell}))   \nonumber\\
&=\textbf{f}(i_{k})^{\top}\textbf{M}^{-1}\textbf{f}(i_{k}) \cdot \textbf{f}(i_{\ell})^{\top}\textbf{M}^{-1}\textbf{f}(i_{\ell})+ \textbf{f}(j_{k})^{\top}\textbf{M}^{-1}\textbf{f}(j_{k})\cdot \textbf{f}(j_{\ell})^{\top}\textbf{M}^{-1}\textbf{f}(j_{\ell})  \nonumber\\
&~~~-\textbf{f}(i_{k})^{\top}\textbf{M}^{-1}\textbf{f}(j_{k})\cdot \textbf{f}(i_{\ell})^{\top}\textbf{M}^{-1}\textbf{f}(j_{\ell})- \textbf{f}(j_{k})^{\top}\textbf{M}^{-1}\textbf{f}(i_{k}) \cdot \textbf{f}(j_{\ell})^{\top}\textbf{M}^{-1}\textbf{f}(i_{\ell})   \nonumber\\
&=\begin{dcases}\frac{(v-1)^{2}(v-2)}{2v}  & \text{for $i_{k}\neq j_{k},i_{\ell}\neq j_{\ell}$} \\
\frac{(v-1)^{3}}{2v}   & \text{for $i_{k}\neq j_{k},i_{\ell}= j_{\ell}$ or $i_{k}= j_{k},i_{\ell}\neq j_{\ell}$}.
\end{dcases}
\end{align} 
Accordingly, for the regression function associated with the interaction of the attributes $k$, $\ell$ and $m$, say, we obtain
\begin{align}\label{eq:4.38}
&(\textbf{f}(i_{k})\otimes\textbf{f}(i_{\ell})\otimes\textbf{f}(i_{m})-\textbf{f}(j_{k})\otimes\textbf{f}(j_{\ell})\otimes\textbf{f}(j_{m}))^{\top}\textbf{M}^{-1}\otimes\textbf{M}^{-1}\otimes\textbf{M}^{-1}  \nonumber \\
&\qquad\qquad\cdot(\textbf{f}(i_{k})\otimes\textbf{f}(i_{\ell})\otimes\textbf{f}(i_{m})-\textbf{f}(j_{k})\otimes\textbf{f}(j_{\ell})\otimes\textbf{f}(j_{m}))\nonumber\\
&\qquad=\textbf{f}(i_{k})^{\top}\textbf{M}^{-1}\textbf{f}(i_{k}) \cdot \textbf{f}(i_{\ell})^{\top}\textbf{M}^{-1}\textbf{f}(i_{\ell}) \cdot \textbf{f}(i_{m})^{\top}\textbf{M}^{-1}\textbf{f}(i_{m})  \nonumber\\
&\qquad\qquad+ \textbf{f}(j_{k})^{\top}\textbf{M}^{-1}\textbf{f}(j_{k})\cdot \textbf{f}(j_{\ell})^{\top}\textbf{M}^{-1}\textbf{f}(j_{\ell})\cdot \textbf{f}(j_{m})^{\top}\textbf{M}^{-1}\textbf{f}(j_{m}) \nonumber\\
&\qquad\qquad-\textbf{f}(i_{k})^{\top}\textbf{M}^{-1}\textbf{f}(j_{k})\cdot \textbf{f}(i_{\ell})^{\top}\textbf{M}^{-1}\textbf{f}(j_{\ell})\cdot \textbf{f}(i_{m})^{\top}\textbf{M}^{-1}\textbf{f}(j_{m})   \nonumber\\
&\qquad\qquad-\textbf{f}(j_{k})^{\top}\textbf{M}^{-1}\textbf{f}(i_{k}) \cdot \textbf{f}(j_{\ell})^{\top}\textbf{M}^{-1}\textbf{f}(i_{\ell})\cdot \textbf{f}(j_{m})^{\top}\textbf{M}^{-1}\textbf{f}(i_{m}) \nonumber\\
&\qquad=\begin{dcases}\frac{(v-1)^{3}(v^2-3v+3)}{4v^2}  & \text{for $i_{k}\neq j_{k},i_{\ell}\neq j_{\ell},i_{m}\neq j_{m}$} \\
\frac{(v-1)^{4}(v-2)}{4v^2}   & \text{for $i_{k}\neq j_{k},i_{\ell}\neq j_{\ell},i_{m}= j_{m}$} \\
\frac{(v-1)^{5}}{4v^2}   & \text{for $i_{k}\neq j_{k},i_{\ell}= j_{\ell},i_{m}= j_{m}$}.
\end{dcases}
\end{align} \par
Now for a pair of alternatives $(\textbf{i},\textbf{j})\in\mathcal{X}^{(S)}_d$ of comparison depth $d$: there are exactly $d$ attributes of the main effects for which $i_{k}$ and $j_{k}$ differ, there are $\frac{1}{2}d(d-1)$ two-attribute interaction terms for which $(i_{k}i_{\ell})$ and $(j_{k}j_{\ell})$ differ in all two attributes $k$ and $\ell$, there are $d(S-d)$ two-attribute interaction terms for which $(i_{k}i_{\ell})$ and $(j_{k}j_{\ell})$ differ in exactly one attribute $k$ or $\ell$, there are $\frac{1}{6}d(d-1)(d-2)$ three-attribute interaction terms for which $(i_{k}i_{\ell}i_m)$ and $(j_{k}j_{\ell}j_m)$ differ in all three attributes $k$, $\ell$ and $m$, there are $\frac{1}{2}(S-d)d(d-1)$ three-attribute interaction terms for which $(i_{k}i_{\ell}i_m)$ and $(j_{k}j_{\ell}j_m)$ differ in exactly two of the associated three attributes and finally, there are $\frac{1}{2}(S-d)(S-d-1)d$ three-attribute interaction terms for which $(i_{k}i_{\ell}i_m)$ and $(j_{k}j_{\ell}j_m)$ differ in exactly one of the associated three attributes. As a consequence, we obtain
\begin{equation*}
\begin{split}
V(d,\bar{\xi})&=(\textbf{f}(\textbf{i})-\textbf{f}(\textbf{j}))^{\top}\textbf{M}(\bar{\xi})^{-1}(\textbf{f}(\textbf{i})-\textbf{f}(\textbf{j}))\\
&=\frac{d(v-1)}{h_{1}(\bar{\xi})}+\frac{d(d-1)}{2}\frac{(v-1)^2(v-2)}{2vh_{2}(\bar{\xi})} +d(S-d)\frac{(v-1)^3}{2vh_{2}(\bar{\xi})}    \\
&\qquad+\frac{d(d-1)(d-2)}{6}\frac{(v-1)^3(v^2-3v+3)}{4v^2h_{3}(\bar{\xi})}  \\
&\qquad+\frac{(S-d)d(d-1)}{2}\frac{(v-1)^4(v-2)}{4v^2h_{3}(\bar{\xi})}\\
&\qquad+\frac{(S-d)(S-d-1)d}{2}\frac{(v-1)^5}{4v^2h_{3}(\bar{\xi})}\\
&=\frac{d(v-1)}{h_{1}(\bar{\xi})}+\frac{d(v-1)^2}{4vh_{2}(\bar{\xi})}\begin{pmatrix}(d-1)(v-2)+2(S-d)(v-1)\end{pmatrix}      \\
&\qquad+\frac{d(v-1)^3}{24v^2h_{3}(\xi)}\Big((d-1)(d-2)(v^2-3v+3)  \\
&\qquad\qquad\qquad\qquad+3(S-d)(d-1)(v-1)(v-2)  \\
&\qquad\qquad\qquad\qquad+3(S-d)(S-d-1)(v-1)^2\Big)\\
&=\frac{d(v-1)}{h_{1}(\bar{\xi})}+\frac{d(v-1)^2}{4vh_{2}(\bar{\xi})}(2Sv-2S-dv-v+2)     \\
&\qquad+\frac{d(v-1)^3}{24v^2h_{3}(\bar{\xi})}\big(3S^2v^2-6S^2v-6Sv^2+3S^2-3Sdv^2+3Sdv  \\
&\qquad\qquad\qquad\qquad+3dv^2+15Sv-9S+d^2v^2-6dv+2v^2-6v+6\big),\\
\end{split}
\end{equation*} 
for $(\textbf{i},\textbf{j})\in\mathcal{X}^{(S)}_d$ which proves the proposed formula. 
\end{proof}

In the case of a single comparison depth it can be shown that the corresponding invariant design $\bar{\xi}$ has the following structure.
\newtheorem{corollary}{Corollary}
\begin{corollary}\label{cor_thrm4}
For a uniform design $\bar{\xi}_{d^{\prime}}$ on a single comparison depth $d^{\prime}$ the variance function is given by  
\begin{equation*}
\begin{split}
V(d,\bar{\xi}_{d^{\prime}})=\frac{d}{d^{\prime}}\begin{pmatrix}p_{1}+p_{2}\frac{2Sv-2S-dv-v+2}{2Sv-2S-d^{\prime}v-v+2}+p_{3}\frac{\lambda(d)}{\lambda(d^{\prime})}\end{pmatrix}.
\end{split}
\end{equation*}
\end{corollary}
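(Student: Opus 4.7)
The plan is to obtain Corollary \ref{cor_thrm4} as a direct specialization of Theorem \ref{thrm4}. Since $\bar{\xi}_{d'}$ is the invariant design that puts all weight on the single comparison depth $d'$, Lemma \ref{lemma2} gives $h_r(\bar{\xi}_{d'})=h_r(d')$ for $r=1,2,3$. Hence the first step is simply to substitute $h_r(d')$ into the formula of Theorem \ref{thrm4}, using the explicit expressions from Lemma \ref{lemma1}.

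Next I would process the three summands one at a time. For the main-effects term, $1/h_1(d')=K/d'$ and multiplication by $d(v-1)$ produces $d(v-1)K/d'$; recognizing $p_1=K(v-1)$ rewrites this as $(d/d')\,p_1$. For the first-order interaction term, inserting $h_2(d')=\tfrac{d'}{2vK(K-1)}(2Sv-2S-d'v-v+2)$ into $\tfrac{d(v-1)^2}{4v h_2(d')}(2Sv-2S-dv-v+2)$ and cancelling the factor $2v$ yields
\begin{equation*}
\frac{d(v-1)^2 K(K-1)}{2d'}\cdot\frac{2Sv-2S-dv-v+2}{2Sv-2S-d'v-v+2},
\end{equation*}
which, using $p_2=\tfrac{1}{2}K(K-1)(v-1)^2$, equals $(d/d')\,p_2\,(2Sv-2S-dv-v+2)/(2Sv-2S-d'v-v+2)$. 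For the second-order term, substituting $h_3(d')=\tfrac{d'}{4v^2K(K-1)(K-2)}\lambda(d')$ into $\tfrac{d(v-1)^3}{24v^2 h_3(d')}\lambda(d)$ gives $\tfrac{d(v-1)^3 K(K-1)(K-2)}{6d'}\,\lambda(d)/\lambda(d')$, and with $p_3=\tfrac{1}{6}K(K-1)(K-2)(v-1)^3$ this is $(d/d')\,p_3\,\lambda(d)/\lambda(d')$.

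Summing the three contributions and factoring out the common $d/d'$ produces exactly the stated formula. The argument is therefore a direct, if slightly bookkeeping-heavy, calculation; the only care required is to track the powers of $v$, $v-1$ and the factors $K(K-1)(K-2)$ so as to match the parameter counts $p_1,p_2,p_3$. No genuine obstacle arises, since both Lemma \ref{lemma1} (giving closed forms of $h_r(d')$) and Theorem \ref{thrm4} (giving the variance function in terms of $h_r$) are already available.
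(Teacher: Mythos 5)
Your proposal is correct and follows essentially the same route as the paper: the paper's proof likewise obtains the corollary by inserting the values $h_r(\bar{\xi}_{d'})=h_r(d')$ from Lemma~\ref{lemma1} into the formula of Theorem~\ref{thrm4} and identifying the resulting coefficients with $p_r=\binom{K}{r}(v-1)^r$. Your term-by-term simplifications are accurate, so nothing further is needed.
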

Here $d$ is the comparison depth of the design $\bar{\xi}_{d}$ defined in Lemma $1$. It is worth noting that if $d=d^{\prime}$, then by the Kiefer-Wolfowitz equivalence theorem \citep{kiefer1960equivalence} the corresponding variance function $V(d,\bar{\xi}_{d^{\prime}})$ will be equal to the total number of the model parameters $p=p_1+p_2+p_3$.

 \par
The following result gives an upper bound on the number of comparison depths required for a $D$-optimal design.
\begin{theorem}\label{theorem5}
The $D$-optimal design $\bar{\xi}^{\ast}$ for the three-attribute interactions model is supported on, at most, three different comparison depths $S$, $d^{\ast}$ and $d^{\ast}+1$.
\end{theorem}
\begin{proof}
According to a corollary of Kiefer-Wolfowitz equivalence theorem \citep[][p.~364]{kiefer1960equivalence} for the $D$-optimal design $\bar{\xi}^{\ast}$ the variance function $V(d,\bar{\xi}^{\ast})$ is equal to the number of parameters $p$ for all $d$. By Theorem~\ref{thrm4} the variance function is a cubic polynomial in the comparison depth $d$ with positive leading coefficient. The variance function $V(d,\bar{\xi}^{\ast})$ may thus be equal to $p$ for, at most, three different values $d_1<d_2<d_3$ of $d$, say. Now, by the Kiefer-Wolfowitz equivalence theorem \citep{kiefer1960equivalence} itself $V(d,\bar{\xi}^{\ast})\leq p$ for all $d=0,1,\dots,S$.\par 
We now make use of the analytic tool of the Kiefer-Wolfowitz equivalence theorem to establish the $D$-optimality of the design $\bar{\xi}^{\ast}$ by direct maximization of $\ln(\det(\mathbf{M}(w_{d^{\ast}}^{\ast}\bar{\xi^{\ast}}_{d^{\ast}}+(1-w_{d^{\ast}}^{\ast})\bar{\xi}_S)))$ \citep[e.g., see][]{kiefer1960equivalence}. \par

By Lemma \ref{lemma1} the entries of the information matrix $\textbf{M}(\bar{\xi}^{\ast},w_S^{\ast})$ are specified by 
\begin{equation*}
\begin{split}
h_{1}(\bar{\xi}^{\ast},w_S^{\ast})&=w_S^{\ast}h_{1}(S)+(1-w_S^{\ast})h_{1}(d^{\ast})\\
&=\frac{Sv-S+Sw_S^{\ast}+2w_S^{\ast}v-3w_S^{\ast}-2v+3}{Kv},
\end{split}
\end{equation*}
\begin{equation*}
\begin{split}
h_{2}(\bar{\xi}^{\ast},w_S^{\ast})&=w_S^{\ast}h_{2}(K)+(1-w_S^{\ast})h_{2}(d^{\ast})=\frac{\lambda_1}{2K(K-1)v^2},
\end{split}
\end{equation*}
where
\begin{equation*}
\begin{split}
&\lambda_1=S^2v^2-2S^2v-Sv^2-S^2w_S^{\ast}-Sw_S^{\ast}v+2w_S^{\ast}v^2+S^2+3Sv+2Sw_S^{\ast}  \\
&\qquad-2v^2-5w_S^{\ast}v-2S+5v+3w_S^{\ast}-3,
\end{split}
\end{equation*}
and
\begin{equation*}
\begin{split}
h_{3}(\bar{\xi}^{\ast},w_S^{\ast})&=w_S^{\ast}h_{3}(S)+(1-w_S^{\ast})h_{3}(d^{\ast}) =\frac{\lambda_2}{4K(K-1)(K-2)v^3},
\end{split}
\end{equation*}
\begin{equation*}
\begin{split}
&\lambda_2=S^3v^3-3S^3v^2-3S^2v^3+3S^3v+S^3w_S^{\ast}+9S^2v^2+2Sv^3-4Sw_S^{\ast}v^2-S^3\\
&\qquad-9S^2v-3S^2w_S^{\ast}-2Sv^2+9Sw_S^{\ast}v+6w_S^{\ast}v^2+3S^2-3Sv-3Sw_S^{\ast}-6v^2  \\
&\qquad-15w_S^{\ast}v+3S+15v+9w_S^{\ast}-9.
\end{split}
\end{equation*}
Now, since the determinant of the information matrix $\textbf{M}(\bar{\xi}^{\ast},w_S^{\ast})$ is proportional to $h_{1}(\bar{\xi}^{\ast},w_S^{\ast})^{p_{1}}h_{2}(\bar{\xi}^{\ast},w_S^{\ast})^{p_{2}}h_{3}(\bar{\xi}^{\ast},w_S^{\ast})^{p_{3}}$, we thus obtain
\begin{equation*}
\begin{split}
\ln\det(\textbf{M}(\bar{\xi}^{\ast},w_S^{\ast}))&=c+ K(v-1)\cdot\ln(Sv-S+Sw_S^{\ast}+2w_S^{\ast}v-3w_S^{\ast}-2v+3)\\
&+\frac{K(K-1)(v-1)^2\cdot\ln \lambda_1}{2}+\frac{K(K-1)(K-2)(v-1)^3\cdot\ln \lambda_2}{6},
\end{split}
\end{equation*}
where $c$ is a constant independent of the weight $w_S^{\ast}$. Taking derivatives with respect to $w_S^{\ast}$ we obtain
\begin{equation*}
\begin{split}
&\frac{\partial}{\partial w_S^{\ast}}\ln\det(\textbf{M}(\bar{\xi}^{\ast},w_S^{\ast}))=K(v-1)\cdot\begin{pmatrix}\frac{S+2v-3}{Sv-S+Sw_S^{\ast}+2w_S^{\ast}v-3w_S^{\ast}-2v+3}\end{pmatrix}\\ 
&\ +\frac{K(K-1)(v-1)^2}{2\lambda_1}\cdot\begin{pmatrix}-S^2-Sv+2S+2v^2-5v+3\end{pmatrix}  \\
&\ +\frac{K(K-1)(K-2)(v-1)^3}{6\lambda_2}\cdot\begin{pmatrix}S^3-4Sv^2-3S^2-3S+9Sv+6v^2-15v+9\end{pmatrix}
\end{split}
\end{equation*}
which has root $w_S^{\ast}=1-w^{\ast}_{d^{\ast}}$. This root gives a maximum for the determinant. The design $\bar{\xi}^{\ast}$ is thus $D$-optimal when we consider the particular case of the reduced design region $\mathcal{X}_{S}\cup\mathcal{X}_{d^{\ast}}$. \par

Again, by inserting the corresponding functions $h_{1}(\bar{\xi}^{\ast},w_S^{\ast}),h_{2}(\bar{\xi}^{\ast},w_S^{\ast})$ and $h_{3}(\bar{\xi}^{\ast},w_S^{\ast})$ into the representation of the variance function $V(S,d,\bar{\xi}^{\ast},w_S^{\ast})$ in Theorem \ref{thrm4}, we obtain
\begin{equation*}
 V(S,d^{\ast},\bar{\xi}^{\ast},w_S^{\ast})= V(S,d^{\ast},d^{\ast}+1,\bar{\xi}^{\ast},w_S^{\ast})\leq p. 
\end{equation*}
Hence, $S,d^{\ast}$ and $d^{\ast}+1$ are integer solutions for the maximum of the variance function which shows the D-optimality of the design in view of the equivalence theorem by \citet{kiefer1960equivalence}. 

\end{proof}
For Results~\ref{theorem1}, \ref{theorem2} and Theorem~\ref{thrm22} on parts of the parameter vector, the $D$-optimal design for the complete parameter vector may depend on the profile strength $S$ as can be seen by the numerical examples for the case of arbitrary levels, $v\geq2$ presented in Table \ref{tab4.7}. We note that for the case $v=2$, $S=K=3$ of full profiles and complete interactions the $D$-optimal design which is given explicitly in \citet{nyarko2019optimal} indicate that all three comparison depths are needed for $D$-optimality. 

\par
For $S\geq 4$ numerical computations indicate that at most two different comparison depths $S$ and $d^{\ast}$ may be required for $D$-optimality. Because it is generally difficult to specify an explicit formula for calculating $d^*$, the following Table \ref{tab4.7} shows the corresponding optimal designs with their optimal comparison depths $d^{\ast}$ in boldface and their corresponding weights $w^{\ast}_{d^{\ast}}$ for various choices of attributes $K$ between $4$ and $10$ and levels $v=2,\dots,8$. Entries of the form $(d^{\ast}, w_{d^*}^{\ast})$ indicate that invariant designs $\bar{\xi^*}=w_{d^*}^*\bar{\xi^*}_{d^*}+(1-w_{d^*}^*)\bar{\xi}_S$ have to be considered, while for single entries $d^{\ast}$ the optimal design $\bar{\xi}^*= \bar{\xi}^*_{d^*}$ has to be considered which is uniform on the optimal comparison depth $d^{\ast}$. In particular, for the case $S=K=4$ of full profiles where $v=2$ the corresponding invariant design depends on the optimal comparison depths $d^*=2$ and $S=4$ with optimal weights $w_{d^*}^{\ast}=0.857$ and $w_{d^*}^{\ast}=0.143$, respectively. The optimality of the designs in Table~\ref{tab4.7} has been checked numerically by virtue of the Kiefer-Wolfowitz equivalence theorem. The values of the normalized variance function $V(d,\bar{\xi}^{\ast})/p$ are recorded in Table~\ref{tab4}, where maximal values less than or equal to $1$ establish optimality.

\begin{table}[H]
\centering
\caption{Optimal designs with intermediate comparison depths $d^{\ast}$ in boldface and optimal weights $w^{\ast}_{d^{\ast}}$ of the form $(d^{\ast}$, $w^{\ast}_{d^{\ast}})$ for the case of full profiles $(S=K)$ and $v$-levels}\label{tab4.7} 
\resizebox{!}{.08\paperheight}{
\begin{tabular}{cccccccc}\toprule
    \multicolumn{6}{c}{$v$}&\\
    \cline{2-8}
    $K$       &    2     &  3 & 4  &  5  & 6    &    7&  8                                \\
    \hline
4&(\textbf{2},\ 0.857)&\textbf{2}&\textbf{2}&\textbf{2}&\textbf{2}&\textbf{2}&\textbf{2}\\
   
5&(\textbf{2},\ 0.833)&(\textbf{2},\ 0.667)&\textbf{3}&\textbf{3}&\textbf{3}&\textbf{3}&\textbf{3}\\

6&(\textbf{3},\ 0.732)&(\textbf{3},\ 0.789)&\textbf{3}&\textbf{4}&\textbf{4}&\textbf{4}&\textbf{4}\\

7&(\textbf{3},\ 0.697)&(\textbf{4},\ 0.322)&\textbf{4}&\textbf{4}&\textbf{4}&\textbf{5}&\textbf{5}\\

8&(\textbf{3},\ 0.644)&\textbf{4}&(\textbf{5}, \ 0.425)&\textbf{5}&\textbf{5}&\textbf{5}&\textbf{5}\\

9&(\textbf{4},\ 0.577)&\textbf{5}&\textbf{5}&\textbf{6}&\textbf{6}&\textbf{6}&\textbf{6}\\

10&(\textbf{4},\ 0.538)&\textbf{5}&\textbf{6}&\textbf{6}&\textbf{7}&\textbf{7}&\textbf{7}\\ \bottomrule
\end{tabular}}
\end{table}

 \begin{table}[H]
\centering
\setlength\tabcolsep{0pt}
\caption{Values of the variance function $V(d,\bar{\xi}^\ast)$ for $\bar{\xi}^{\ast}$ from Table~\ref{tab4.7} in the case of full profiles ($S=K$) and $v$-levels (boldface \textbf{1} corresponds to the optimal comparison depths $d^*$)}\label{tab4} 
\begin{tabular*}{\linewidth}{@{\extracolsep{\fill}}
    *{12}{D{.}{.}{4}}
                }
    \toprule
  & \multicolumn{9}{c}{$d$} \\
    \cmidrule(lr){3-12}
   K & v       &    1     &  2 & 3  &  4  & 5    &    6&  7    &8&9&10                       \\
    \hline
4&2&0.875&\textbf{1} &0.875&\textbf{1}&&&&&&\\   
&3&0.813&\textbf{1} &0.938&1&&&&&&\\   
&4&0.793&\textbf{1} &0.953&0.983&&&&&&\\   
&5&0.783&\textbf{1} &0.962&0.980&&&&&&\\   
&6&0.777&\textbf{1} &0.968&0.980&&&&&&\\   
&7&0.773&\textbf{1} &0.973&0.981&&&&&&\\   
&8&0.770&\textbf{1} &0.976&0.982&&&&&&\\ \hline
5&2&0.760&\textbf{1} &0.960&0.880&\textbf{1}&&&&&\\   
&3&0.723&\textbf{1} &1&0.954&\textbf{1}&&&&&\\   
&4&0.689&0.967 &\textbf{1}&0.952&0.987&&&&&\\   
&5&0.666&0.951 &\textbf{1}&0.961&0.981&&&&&\\ 
&6&0.653&0.941 &\textbf{1}&0.968&0.980&&&&&\\ 
&7&0.644&0.934 &\textbf{1}&0.972&0.981&&&&&\\ 
&8&0.638&0.929 &\textbf{1}&0.976&0.982&&&&&\\  \hline
6&2&0.701&0.983 &\textbf{1}&0.906&0.855&\textbf{1}&&&&\\   
&3&0.624&0.921&\textbf{1}&0.968&0.932&\textbf{1}&&&&\\ 
&4&0.591&0.895 &\textbf{1}&0.993&0.963&0.997&&&&\\  
&5&0.576&0.882 &0.997&\textbf{1}&0.972&0.992&&&&\\  
&6&0.560&0.865&0.987&\textbf{1}&0.976&0.989&&&&\\  
&7&0.550&0.854 &0.981&\textbf{1}&0.979&0.988&&&&\\  
&8&0.543&0.846&0.977&\textbf{1}&0.982&0.988&&&&\\  \hline
7&2&0.615&0.917 &\textbf{1}&0.956&0.879&0.863&\textbf{1}&&&\\   
&3&0.553&0.860 &0.988&\textbf{1}&0.963&0.941&\textbf{1}&&&\\   
&4&0.519&0.822 &0.965&\textbf{1}&0.981&0.962&0.997&&&\\   
&5&0.498&0.800 &0.952&\textbf{1}&0.992&0.974&0.993&&&\\   
&6&0.487&0.787 &0.944&\textbf{1}&0.999&0.983&0.995&&&\\   
&7&0.479&0.777 &0.937&0.997&\textbf{1}&0.985&0.994&&&\\   
&8&0.471&0.768 &0.929&0.994&\textbf{1}&0.987&0.993&&&\\ \hline
8&2&0.559&0.872 &\textbf{1}&1&0.945&0.884&0.884&\textbf{1}&&\\   
&3&0.490&0.792&0.948&\textbf{1}&0.990&0.958&0.948&1&&\\   
&4&0.462&0.759 &0.924&0.993&\textbf{1}&0.980&0.969&\textbf{1}&&\\   
&5&0.442&0.732 &0.902&0.981&\textbf{1}&0.988&0.977&0.995&&\\   
&6&0.429&0.716 &0.889&0.974&\textbf{1}&0.994&0.982&0.994&&\\   
&7&0.421&0.706&0.880&0.970&\textbf{1}&0.997&0.987&0.995&&\\   
&8&0.415&0.698 &0.874&0.960&\textbf{1}&1&0.991&0.996&&\\ \hline
9&2&0.504&0.811 &0.962&\textbf{1}&0.969&0.910&0.868&0.883&\textbf{1}&\\   
&3&0.437&0.726 &0.894&0.972&\textbf{1}&0.969&0.946&0.946&1&\\   
&4&0.414&0.696&0.872&0.965&\textbf{1}&0.994&0.977&0.971&1&\\   
&5&0.397&0.674 &0.853&0.953&0.995&\textbf{1}&0.989&0.981&1&\\   
&6&0.384&0.657 &0.836&0.940&0.989&\textbf{1}&0.992&0.985&0.996&\\   
&7&0.376&0.645 &0.825&0.932&0.985&\textbf{1}&0.995&0.988&0.995&\\   
&8&0.370&0.637 &0.817&0.927&0.982&\textbf{1}&0.997&0.990&0.996&\\ \hline
10&2&0.462&0.763 &0.932&\textbf{1}&0.997&0.956&0.905&0.874&0.896&\textbf{1}\\   
&3&0.395&0.669 &0.843&0.938&\textbf{1}&0.972&0.953&0.938&0.947&1\\   
&4&0.374&0.642 &0.822&0.929&0.981&\textbf{1}&0.987&0.974&0.972&1\\   
&5&0.359&0.622 &0.803&0.917&0.977&\textbf{1}&1&0.989&0.985&1\\   
&6&0.348&0.606 &0.786&0.903&0.968&0.996&\textbf{1}&0.993&0.988&0.998\\   
&7&0.340&0.594 &0.774&0.892&0.961&0.993&\textbf{1}&0.995&0.990&0.997\\   
&8&0.335&0.586&0.765&0.885&0.956&0.990&\textbf{1}&0.996&0.991&0.996 \\  \bottomrule
\end{tabular*}
\end{table}
 
 As was already pointed out, the corresponding designs (Table~\ref{tab4.7}) possess large number of comparisons. For the case of binary attributes ($v=2$) the designs in Table~\ref{tab4.7} when $K=4, 5$ and $6$, for instance, consist of $96, 320$ and $1280$ pairs, respectively. In this situation it is possible to construct a design for estimating main effects and two and three attribute interactions. For example, if $K=3$, $v=2$ and $d^\ast=1$, the paired comparison design consists of $24$ pairs presented in the following Table~\ref{tab4.7333}:
 \begin{table}[H]
\centering
\caption{Design for estimating main effects plus two plus three attribute interactions of $K=3$ two-level attributes}\label{tab4.7333} 
\begin{tabular}{ccc}\toprule
    $n$       &    Pair $(\textbf{i}_n, \textbf{j}_n)$     &    \\\hline
1&((1,1,1,1,1,1,1),\ (2,1,1,2,2,1,2))\\ 
2&((1,2,1,2,1,2,2),\ (2,2,1,1,2,2,1))\\
3&((1,1,1,1,1,1,1),\ (2,1,1,2,2,1,2))\\ 
4&((1,2,1,2,1,2,2),\ (2,2,1,1,2,2,1))\\
5&((1,1,2,1,2,2,2),\ (2,1,2,2,1,2,1))\\
6&((1,2,2,2,2,1,1),\ (2,2,2,1,1,1,2))\\
7&((1,1,2,1,2,2,2),\ (2,1,2,2,1,2,1))\\
8&((1,2,2,2,2,1,1),\ (2,2,2,1,1,1,2))\\

9& ((1,1,1,1,1,1,1),\ (1,2,1,2,1,2,2))\\ 
10&((1,1,2,1,2,2,2),\ (1,2,2,2,2,1,1))\\
11&((1,1,1,1,1,1,1),\ (1,2,1,2,1,2,2))\\ 
12&((1,1,2,1,2,2,2),\ (1,2,2,2,2,1,1))\\
13&((2,1,1,2,2,1,2),\ (2,2,1,1,2,2,1))\\
14&((2,1,2,2,1,2,1),\ (2,2,2,1,1,1,2))\\
15&((2,1,1,2,2,1,2),\ (2,2,1,1,2,2,1))\\
16&((2,1,2,2,1,2,1),\ (2,2,2,1,1,1,2))\\

17&((1,1,1,1,1,1,1),\ (1,1,2,1,2,2,2))\\ 
18&((2,1,1,2,2,1,2),\ (2,1,2,2,1,2,1))\\
19&((1,1,1,1,1,1,1),\ (1,1,2,1,2,2,2))\\ 
20&((2,1,1,2,2,1,2),\ (2,1,2,2,1,2,1))\\
21&((1,2,1,2,1,2,2),\ (1,2,2,2,2,1,1))\\
22&((2,2,1,1,2,2,1),\ (2,2,2,1,1,1,2))\\
23&((1,2,1,2,1,2,2),\ (1,2,2,2,2,1,1))\\
24&((2,2,1,1,2,2,1),\ (2,2,2,1,1,1,2))\\

\bottomrule
\end{tabular}
\end{table}
 For convenience in notation, in the final design, the effects-coded levels $1$ and $-1$ are indicated by the first and last levels of the attributes, respectively. It should be noted that the design contains repeated pairs, which is as a result of the large number of the observations. This design is used to assess university students satisfaction with online teaching and psychological pressure on learning during the COVID-19 pandemic in Section $5$.

\section{Application}\label{aaa}
In this section we consider a practical situation where up to three factor interaction is of interest. In particular, we employ the design presented in Table~\ref{tab4.7333} to assess university students satisfaction with online teaching and psychological pressure on learning during the COVID-19 pandemic. The JMP Pro (Version 16.0) statistical software was used to analyzed the responses of 150 students of the University of Ghana who were intercepted on the university campus in the month of August 2021 (Fig. \ref{fig:1}). These results were further classified according to levels of psychological pressure on learning (Fig. \ref{eq:2}, Fig. \ref{eq:3} and Fig.\ref{fig:4}). The Log-Worth and the corresponding P-values of the various factors (or attributes) are also reported in the order of importance. In most of the cases, the three factor interactions perform well and important. \vspace*{-4mm}
\begin{figure}[H]
\includegraphics[width=1.2\textwidth]{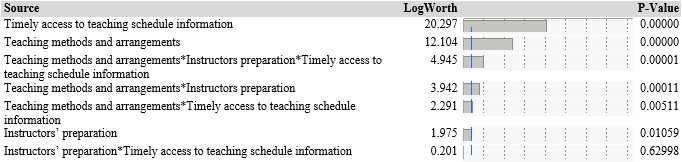}
\caption{Effect summary of university students satisfaction with online teaching during the COVID-19 pandemic}
\label{fig:1}   
\end{figure}
\vspace*{-6mm}
\begin{figure}[H]
\includegraphics[width=1.2\textwidth]{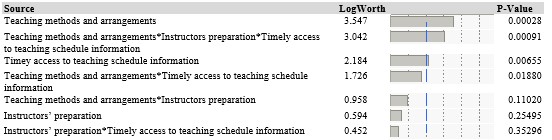}
\caption{Effect summary of students experiencing little psychological pressure during COVID-19 pandemic}
\label{fig:2}   
\end{figure}
\begin{figure}[H]
\includegraphics[width=1.2\textwidth]{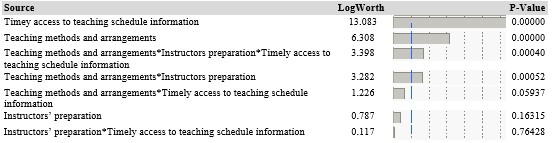}
\caption{Effect summary of students experiencing lots of psychological pressure during COVID-19 pandemic}
\label{fig:3}   
\end{figure}
\vspace*{-5mm}
\begin{figure}[H]
\includegraphics[width=1.2\textwidth]{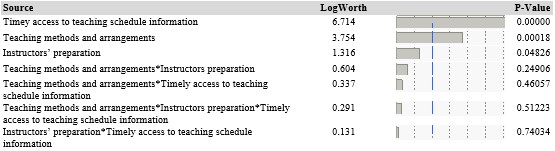}
\caption{Effect summary of students experiencing no psychological pressure during COVID-19 pandemic}
\label{fig:4}   
\end{figure}








\section{Discussion}
For paired comparisons where the alternatives are described by an analysis of variance model with main effects only optimal designs require that the paired alternatives show distinct levels in all attributes \citep[see][]{grasshoff2004optimal}. In a two-attribute interactions model pairs have to be used for an optimal design in which approximately $(v-1)/v$ of the attributes are distinct and $1/v$ of the attributes coincide \citep[see][]{grasshoff2003optimal}. Here it is shown that in a three-attribute interactions model one has to consider both types of pairs in which either all attributes have distinct levels or approximately $(v-1)/(v+1)$ of the attributes are distinct and $2/(v+1)$ of the attributes coincide to obtain a $D$-optimal design for the whole parameter vector \citep[see][arXiv]{nyarko2019optimalvlevels}. The resulting optimal designs for the particular situation of $v=2$ levels for each attribute have been obtained \citep{nyarko2019optimal}. Optimal designs may be concentrated on one, two or three different comparison depths depending on the number of levels and attributes. The so obtained designs can serve as a benchmark to judge the efficiency of any competing design as well as a starting point to construct fractions (or exact designs) which share the property of optimality and can be realized with a reasonable number of comparisons. A practical situation of importance is explored where the design enables identification of main effects and two and three attribute interactions \citep[see e.g.,][]{louviere2000stated, shah2015valuing}. 
 
\vspace{4mm} \noindent
\textbf{Acknowledgement}\vspace{1mm} \\
This work was partially supported by Grant - Doctoral Programmes in Germany, $2016/2017~(57214224)$ - of the German Academic Exchange Service (DAAD).
\bibliography{reference4th} 
\end{document}